\documentclass{article}

%\renewenvironment{quote}
%  {\begin{tabular}{|p{13cm}}}
%  {\end{tabular}}

\usepackage{graphicx}
\usepackage[space]{grffile}
\usepackage{latexsym}
\usepackage{textcomp}
\usepackage{longtable}
\usepackage{multirow,booktabs}
\usepackage{amsfonts,amsmath,amssymb}

\usepackage{longtable}

\usepackage{multirow,booktabs}

\usepackage{amsthm,amsfonts,amsmath,amssymb}

\usepackage{url}

\usepackage{hyperref}

\hypersetup{colorlinks=false,pdfborder={0 0 0}}
\newif\iflatexml\latexmlfalse
\usepackage[utf8]{inputenc}

\usepackage[normalem]{ulem}

\usepackage{tikz}

%%%%%%%%%%%%%%%%%%%%%%%%%%%%%%%%%%%%%%%%%%%%%%%%%%%%%%%%%%%%%%%%%%%%%
%%%% PACKAGES
%%%%%%%%%%%%%%%%%%%%%%%%%%%%%%%%%%%%%%%%%%%%%%%%%%%%%%%%%%%%%%%%%%%%%

\usepackage{comment}
\newtheorem{definition}{Definition}
\newtheorem{lemma}{Lemma}
\newtheorem{example}{Example}
\newtheorem{theorem}{Theorem}
\newtheorem{fact}{Fact}

\newtheorem{corollary}{Corollary}

\renewcommand{\L}{\mathcal{L}}
\newcommand{\N}{N}

\newcommand{\I}{\Atoms}
\newcommand{\Model}{\mathcal{M}}
\renewcommand{\S}{\mathcal{A}}
\newcommand{\A}{{\bf A}}
\newcommand{\G}{{\bf G}}
\newcommand{\Atoms}{{\bf P}}

\renewcommand{\O}{{\bf O}}

\newcommand{\D}{\mathcal O}
\newcommand{\AB}{\mathcal{O}^\ast}
\newcommand{\maj}{\mathsf{maj}}
\newcommand{\pv}{\mathsf{pv}}

\newcommand{\tuple}[1]{\left\langle #1 \right\rangle}
\newcommand{\set}[1]{\left\{ #1 \right\}}

\newcommand{\0}{{\bf 0}}
\newcommand{\1}{{\bf 1}}

\renewcommand{\phi}{\varphi}

%%%%%% LOGIC STUFF
%\newcommand{\Val}{V}
\renewcommand{\phi}{\varphi}

\newcommand{\true}[1]{\lVert #1 \rVert}

\newcommand{\K}{\mathsf{K}}

\newcommand{\lequiv}{\leftrightarrow}
\newcommand{\limp}{\rightarrow}
\newcommand{\ldia}[1]{\left\langle #1 \right\rangle}
\newcommand{\lbox}[1]{\left[ #1 \right]}

\newcommand{\Stb}{\mathsf{stb}}

%%%%%%  META SYNTAX

\newcommand{\AND}{\mbox{   \textsc{and}   }}
\newcommand{\IFF}{\Longleftrightarrow}

\makeindex

%%%%%%%%%%%%%%%%

\begin{document}

\title{Liquid Democracy: \\ An Analysis in Binary Aggregation and Diffusion\footnote{{\bf Working paper}: The paper collects work presented at: {\em Dynamics in Logic IV}, TU Delft, November 2016; seminars at the Computer Science Departments of the University of Leicester and the University of Oxford, December 2016; {\em Dutch Social Choice Colloquium}, December 2016. The authors wish to thank the participants of the above workshops and seminars for many helpful suggestions. The authors wish also to thank Umberto Grandi for many insightful comments on an earlier version of this paper. Both authors acknowledge support for this research by EPSRC under grant EP/M015815/1.}}

\author{Zo\'e Christoff$^{\star}$ and Davide Grossi$^{\star\star}$ \\
$^{\star}$
Department of Philosophy,
University of Bayreuth \\
\url{zoe.christoff@gmail.com}\\
$^{\star\star}$
Department of Computer Science, University of Liverpool \\
\url{d.grossi@liverpool.ac.uk}
}

%\date{}

%%%%%%%%%%%%%%%%%%%%%%%%%%%%%%%%%%%%

\maketitle

\begin{abstract}
The paper proposes an analysis of liquid democracy (or, delegable proxy voting) from the perspective of binary aggregation and of binary diffusion models. We show how liquid democracy on binary issues can be embedded into the framework of {\em binary aggregation with abstentions}, enabling the transfer of known results about the latter---such as impossibility theorems---to the former. This embedding also sheds light on the relation between delegation cycles in liquid democracy and the probability of collective abstentions, as well as the issue of individual rationality in a delegable proxy voting setting. We then show how liquid democracy on binary issues can be modeled and analyzed also as a specific process of dynamics of binary opinions on networks. These processes---called {\em Boolean DeGroot processes}---are a special case of the DeGroot stochastic model of opinion diffusion. We establish the convergence conditions of such processes and show they provide some novel insights on how the effects of delegation cycles and individual rationality could be mitigated within liquid democracy.

The study is a first attempt to provide theoretical foundations to the delgable proxy features of the liquid democracy voting system. Our analysis suggests recommendations on how the system may be modified to make it more resilient with respect to the handling of delegation cycles and of inconsistent majorities.

%This paper brings closer together the voting system known as {\em liquid democracy} or %(delegative) {\em proxy voting}, where each agent chooses on each issue either to %express his own opinion or to delegate to another agent, and the DeGroot model for %opinion diffusion in a social network, where each agent iteratively updates his %opinion under the weighted influence of others. We first propose an analysis of proxy %voting from a judgment aggregation perspective.  In particular, 
\end{abstract}

\newpage

%%%%%%%%%%%%%%%%%%%%%%%%%%%%%%%%%%%%% index
{\small
\tableofcontents
}
\newpage

%%%%%%%%%%%%%%%%%%%%%%%%%%%%%%%%%%%%% intro

\section{Introduction}

Liquid  democracy \cite{liquid_feedback} is a form of democratic decision-making considered to stand between direct and representative democracy. It has been used, advocated and popularized by local and even national parties (e.g., Demoex\footnote{\url{demoex.se/en/}} in Sweden, and Piratenpartei\footnote{\url{www.piratenpartei.de}} in Germany) to coordinate the behavior of party representatives in assemblies, as well as campaigns (e.g., Make Your Laws\footnote{\url{www.makeyourlaws.org}} in the US). At its heart is voting via a delegable proxy, also called sometimes transitive proxy. For each issue submitted to vote, each agent can either cast its own vote, or it can delegate its vote to another agent---a proxy---and that agent can delegate in turn to yet another agent and so on. This differentiates liquid democracy from standard proxy voting \cite{Miller_1969,Tullock_1992}, where proxies cannot delegate their vote further. Finally, the agents that decided not to delegate their votes cast their ballots (e.g., under majority rule, or adaptations thereof), but their votes now carry a weight consisting of the number of all agents that, directly or indirectly, entrusted them with their vote.

\paragraph{Scientific context and contribution}
Analyses of standard (non-delegable) proxy voting from a social choice-theoretic perspective---specifically through the theory of spatial voting---have been put forth in \cite{Alger_2006} and \cite{Green_Armytage_2014}. Delegable proxy has not, to the best of our knowledge, been object of study so far, with the notable exception of \cite{Boldi_2011} which focuses specifically on algorithmic aspects of a variant of liquid democracy (which the authors refer to as {\em viscous democracy}) with applications to recommender systems.

%In proxy voting agents may decide to delegate their vote to exactly one other agent, or to %exercise their voting right themselves. Delegations determine therefore a graph, and it should %be clear that such graph has the same properties of the influence graphs studied in the earlier %sections (they are serial and functional): every agent delegates to exactly one other agent %(possibly itself) who becomes the agent's trustee. So in proxy voting voters are effectively %only those agents that entrusted themselves with the vote, and their vote carries as weight the %cardinality of the set of all agents connected to them by a path in the delegation graph.

%\paragraph{Contribution}
The objective of the paper is to provide a first analysis, via formal methods, of the liquid democracy voting system based on delegable proxy. This, we hope, should point to a number of future lines of research and stimulate further investigations into this and related systems.

\paragraph{Outline}
The paper starts in Section \ref{sec:preliminaries} by introducing some preliminaries on the theory of binary aggregation, which is the framework of reference for this study. It is then structured in two parts. This preliminary section presents also novel results on binary aggregation with abstentions. The first part (Section \ref{sec:proxy}) studies voting in liquid democracy from the point of view of the delegation of voting power: we study delegable proxy aggregators using the machinery of binary and judgment aggregation. This allows us to shed novel light on some issues involved in the liquid democracy system, in particular: the issue of circular delegation, and the issue of individual irrationality when voting on logically interdependent issues. The second part (Sections \ref{sec:diffusion} and \ref{sec:logic}) studies voting in liquid democracy as a very specific type of opinion diffusion on networks, whereby delegation is rather interpreted as the willingness to copy the vote of a trustee. We show that this perspective provides some interesting insights on how to address the above mentioned issues of circular delegations and individual irrationality. Section \ref{sec:conclusions} concludes the paper and outlines some on-going lines of research.

%The paper is structured in two parts.
%Section \ref{sec:preliminaries} introduces the paper's notation and the key definition of %Boolean DeGroot process. 
%Section \ref{sec:convergence} studies necessary and sufficient conditions for those processes %to converge. 
%Section \ref{sec:logic} shows how off-the-shelf fixpoint logics (specifically the modal %$\mu$-calculus) can be used to specify the properties of such processes formally. 
%Section \ref{sec:coloring} elaborates further on the link of our work with the propositional %opinion diffusion model.
%It studies convergence conditions for a simple generalization of Boolean DeGroot processes, %where several influencers are allowed and opinions change under unanimity of the influencers.
%Section \ref{sec:liquid} shows how Boolean DeGroot processes relate to liquid democracy, %contributing some novel insights into the understanding of delegation cycles. 

%%%%%%%%%%%%%%%%%%%%%%% preliminaries

\section{Binary Aggregation with Abstention} \label{sec:preliminaries}

The formalism of choice for this paper is binary aggregation \cite{grandi13lifting} with abstention.\footnote{The standard framework of binary aggregation without abstention is sketched in the appendix for ease of reference.} This preliminary section is devoted to its introduction.

\subsection{Opinions and Opinion Profiles}

A binary aggregation structure (\emph{BA structure}) is a tuple $\S = \tuple{\N,\Atoms,\gamma}$ where:
\begin{itemize}
%[noitemsep]
\item $\N = \set{1,\dots,n}$ is a non-empty finite set individuals s.t. $|\N|= n \in \mathbb{N}$;
\item $\Atoms = \set{p_1,\dots,p_m}$ is a non-empty finite set of issues ($|\Atoms|= m \in \mathbb{N}$), each represented by a propositional atom;
\item $\gamma \in \L$ is an (integrity) constraint, where $\L$ is the propositional language constructed by closing $\Atoms$ under a functionally complete set of Boolean connectives (e.g., $\set{\neg, \wedge}$).
\end{itemize}

An {\em opinion} function $O$ is an assignment acceptance/rejection values (or, truth values) to the set of issues $\Atoms$. Thus, $O(p)=\0$ (respectively, \mbox{$O(p)=\1$}) indicates that opinion $O$ rejects (respectively, accepts) the issue $p$. Syntactically, the two opinions correspond to the truth of the literals $p$ or $\neg p$. For $p \in \Atoms$ we write $\pm p$ to denote one element from $\set{p, \neg p}$, and $\pm \Atoms$ to denote $\bigcup_{p\in\Atoms} \set{p, \neg p}$, which we will refer to as the {\em agenda} of $\S$. Allowing abstention in the framework of binary aggregation amounts to considering incomplete opinions: an {\em incomplete opinion} is a partial function from $\Atoms$ to $\set{\0,\1}$. We will study it as a function $O: \Atoms \rightarrow \set{\0,\1, \ast}$ thereby explicitly denoting the undetermined value corresponding to abstention. 
%In what follows we will often refer to `incomplete %opinions' simply as `opinions'.

We say that the incomplete opinion of an agent $i$ is \emph{consistent} if the set of formulas $\set{p \mid O_i(p) = \1} \cup \set{\neg p \mid O_i(p) = \0} \cup \set{\gamma}$ can be extended to a model of $\gamma$ (in other words, if the set is satisfiable). Intuitively, the consistency of an incomplete opinion means that the integrity constraint is consistent with $i$'s opinion on the issues she does not abstain about. We also say that an incomplete opinion is {\em closed} whenever the following is the case: {\em if} the set of propositional formulas $\set{p \mid O_i(p) = \1} \cup \set{\neg p \mid O_i(p) = \0} \cup \set{\gamma}$ logically implies $p$ (respectively, $\neg p$), {\em then} $O_i(p) = 1$ (respectively, $O_i(p) = 0$). That is, individual opinions are closed under logical consequence or, in other words, agents cannot abstain on issues whose acceptance or rejection is dictated by their expressed opinions on other issues. The set of incomplete opinions is denoted $\AB$ and the set of consistent and closed incomplete opinions $\AB_c$. As the latter are the opinions we are interested in, we will often refer to them simply as individual opinions.

An \emph{opinion profile} $\O = (O_1,\dots,O_{n})$ records the opinion, on the given set of issues, of every individual in $\N$. Given a profile $\O$ the $i^{\mathit{th}}$ projection $\O$ is denoted $O_i$ (i.e., the opinion of agent $i$ in profile $\O$). Let us introduce some more notation.
We also denote by $\O(p)= \set{i \in \N \mid O_{i}(p)= \1}$ the set of agents accepting issue $p$ in profile $\O$, by $\O(\neg p)= \set{i \in \N \mid O_{i}(p)= \0}$ and by $\O(\pm p) = \O(p) \cup \O(\neg p)$ the set of non-abstaining agents. We write $\O =_{-i} \O'$ to denote that the two profiles $\O$ and $\O'$ are identical, except for possibly the opinion of voter $i$.

\subsection{Aggregators}

Given a BA structure $\S$, an \emph{aggregator} (for $\S$) is a function $F:(\AB_{c})^\N \to \AB$, mapping every profile of individual opinions to one collective (possibly incomplete) opinion.\footnote{It is therefore worth stressing that, in this paper, we study aggregators that are resolute (that is, output exactly one value), even though they allow for collective abstention.} $F(\O)(p)$ denotes the outcome of the aggregation on issue $p$. The benchmark aggregator is the \emph{issue-by-issue strict majority rule} ($\maj$), which accepts an issue if and only if the majority of the non-abstaining voters accept that issue: 
\begin{align}\label{eq:majast}
\maj(\O)(p)=  
\begin{cases}
        \1  & \mathit{ if }  |\O(p)|  > |\O(\neg p)|\\
        \0    & \mathit{ if }  |\O(\neg p)| > |\O(p)| \\
       \ast   & \mathit{ otherwise }   \\
\end{cases}
\end{align}
%\textcolor{blue}{simply refer to $\maj$ without subscript}
We will refer to this rule simply as `majority'.

%As we will show, 
Majority can be thought of as a 
%class of 
quota rule. In general, quota rules in binary aggregation with abstention are of the form: accept when the proportion of \emph{non-abstaining} individuals accepting is above the acceptance-quota, reject when the proportion of \emph{non-abstaining} individuals is above the rejection-quota, and abstain otherwise:\footnote{There are several ways to think of quota rules in the presence of abstentions. Instead of a quota being a proportion of non-abstaining agents, one could for instance define rules with absolute quotas instead: accept when at least $n$ agents accept, independently of how many agents do not abstain. In practice, voting rules with abstention are often a combination of those two ideas: accept an issue if a big enough proportion of the population does not abstain, and if a big enough proportion of those accept it.}

\begin{definition}[Quota rules] \label{def:quota}
Let $\S$ be an aggregation structure.  
A {\em quota rule} (for $\S$) is defined as follows, for any issue $p\in\Atoms$, and any opinion profile $\O\in\AB$:\footnote{The definition uses the ceiling function $\lceil x \rceil$ denoting the smallest integer larger than $x$.}
\begin{align}\label{eq:quotarules}
F(\O)(p)=  
\begin{cases}
        \1  & \mbox{ if }  |\O(p)|  \geq \left\lceil q_\1(p) \cdot |\O(\pm p)| \right\rceil\\
        \0  & \mbox{ if }  |\O(\neg p)| \geq \left\lceil q_\0(p) \cdot |\O(\pm p)| \right\rceil  \\
       \ast & \mbox{ otherwise } \\
\end{cases}
\end{align} \label{eq:quota}
where for $x \in \set{\0, \1}$, $q_x$ is a function $q_x: \Atoms \to (0,1] \subset \mathbb{Q}$ assigning a positive rational number smaller or equal to $1$ to each issue, and such that, for each $p \in \Atoms$:
\begin{align}
%0 < q_x\leq 1 \label{eq:nontrivial}, \\ 
q_x(p) > 1 - q_{(\1 - x)}(p), \label{eq:constraint}
\end{align}
A quota rule is called: {\em uniform} if, for all $p_i,p_j \in \Atoms$, $q_x(p_i) = q_x(p_j)$;
it is called {\em symmetric} if, for all $p \in \Atoms$, $q_\1(p) = q_\0(p)$.%%\footnote{The definition of symmetric rule could be relaxed somewhat to: for all $p\in\Atoms$, and all $\O\in\AB$, $\lceil q_\1(p) \cdot |\O(\pm p)| \rceil =  \lceil q_\0(p) \cdot |\O(\pm p)| \rceil$.}
%If $q_\1 = q_\0$ then the quota rule is called {\em symmetric}.
\end{definition}
Notice that the definition excludes trivial quota.\footnote{Those are quotas with value $0$ (always met) or $>1$ (never met). Restricting to non-trivial quota is not essential but simplifies our exposition.} 
It should also be clear that, by \eqref{eq:constraint} the above defines an aggregator of type $(\AB_{c})^\N \to \AB$ as desired.\footnote{What needs to be avoided here is that both the acceptance and rejection quota are set so low as to make the rule output both the acceptance and the rejection of a given issue} Notice also that if the rule is symmetric, then \eqref{eq:constraint} forces $q_x > \frac{1}{2}$.

\begin{example} \label{example:maj}
The majority rule \eqref{eq:majast} is a uniform and symmetric quota rule where $q_\1$ and $q_\0$ are set to meet the equation $\lceil q_\1(p) \cdot |\O(\pm p)| \rceil = \lceil q_\0(p) \cdot |\O(\pm p)| \rceil = \left\lceil \frac{|\O(\pm p)| + 1}{2}\right\rceil$, for any issue $p$ and profile $\O$. This is achieved by setting the quota as $\frac{1}{2}<q_\1,q_\0 \leq \frac{1}{2}+\frac{1}{|N|} = \frac{|N|+1}{2|N|}$. More precisely one should therefore consider $\maj$ as a class of quota rules yielding the same collective opinions.
%where $0< \epsilon \leq\frac{1}{2|N|}$.
%HERE IS THE OLD GENERAL VERSION: Note that for any aggregation structure $\S$ with fixed number of agents $|N|$, we can redefine the above general majority rulethe class of quota rules such that $\frac{1}{2}<q_\0, q_\1\leq\frac{1}{2}+\frac{1}{|N|}$. We will call those rules the `majority quota rules'.
\end{example}

\begin{example}
The uniform and symmetric unanimity rule is defined by setting $q_\1 = q_\0 = 1$. A natural uniform but asymmetric variant of unanimity can be obtained by setting $q_\1 = 1$ and $q_\0 = \frac{1}{|\N|}$.
\end{example}

Let us finally note an important difference between quota rules in binary aggregation with abstentions vs. without abstentions. In a framework without abstentions quota rules are normally defined by a unique acceptance quota $q^\1$, the rejection quota being uniquely determined as $q^\0 = 1 - q^\1$. As a consequence, the majority rule, when $|N|$ is odd, is the only unbiased quota rule in the standard framework. This is no longer the case when abstentions are considered. A novel characterization of the majority rule will be given in Section \ref{subsec:char.quota}.

%%%%%%%%%%%%%%%%%%%%%%%%%%%%%%%%

\subsection{Agenda conditions}

%Given an aggregation structure $\S$ with the set of %issues $\I$ let $\pm \I = \set{p \mid p \in \I} \cup %\set{\neg p \mid p \in \I}$ be called the {\em agenda} %of $\S$.

\begin{definition}[simple/evenly negatable agenda]
An agenda $\pm \I$ is said to be {\em simple} if there exists no set $X \subseteq \pm \Atoms$  such that: $|X|\geq 3$, and  $X$  is minimally $\gamma$-inconsistent, that is:
\begin{itemize}
\item $X$ is inconsistent with $\gamma$
\item For all  $Y\subset X$, $Y$ is consistent with $\gamma$ (or, $\gamma$-consistent).
\end{itemize}
An agenda is said to be {\em evenly negatable} if there exists a minimal $\gamma$-inconsistent set $X \subseteq \pm \Atoms$ such that for a set $Y \subseteq X$ of even size, $X\backslash Y \cup \set{\neg p \mid p \in Y}$ is $\gamma$-consistent. It is said to be {\em path-connected} if there exists $p_1, \ldots, p_n \in \pm \I$ such that $p_1 \models^c p_2, \ldots, p_{n-1} \models^x p_n$ where $p_i \models^c p_{i+1}$ (conditional entailment) denotes that there exists $X \subseteq \pm\I$, which is $\gamma$-consistent with both $p_i$ and $\neg p_{i+1}$, and such that $\set{p} \cup X \cup \set{\gamma}$ logically implies $p_{i +1}$. 
\end{definition}
We refer the reader to \cite[Ch. 2]{Grossi_2014} for a detailed exposition of the above conditions. We provide just a simple illustrative example.

\begin{example}
Let $\I = \set{p, q, r}$ and let $\gamma = (p \wedge q ) \rightarrow r$. $\pm \I$ is not simple. The set $\set{p, q, \lnot r} \subseteq \pm\Atoms$ is inconsistent with $\gamma$, but none of its subsets is. 
\end{example}

%%%%%%%%%%%%%%%%%%%%%%%%%%%%%%%%%%%%

\subsection{Properties of aggregators}\label{subsec:axiomatic}

We start by recalling some well-known properties of aggregators from the judgment aggregation literature, adapted to the setting with abstention: 

\begin{definition} \label{def:properties}
Let $\S$ be an aggregation structure. An aggregator $F: (\AB_{c})^\N \to \AB$ is said to be:
\begin{description}
\item[unanimous] iff for all $p\in \Atoms$, for all profiles $\O$ and all $x\in\{0,1,\ast\}$: if for all $i\in \N, O_i(p) = x$, then $F(\O)(p)= x $. I.e., if everybody agrees on a value, that value is the collective value.
\item[anonymous] iff for any bijection $\mu: \N\rightarrow\N$, $F(\O)=F(\O^\mu)$, where $\O^\mu = \tuple{O_{\mu(1)}, \ldots, O_{\mu(n)}}$. I.e., permuting opinions among individuals does not affect the output of the aggregator.
\item[$p$-dictatorial] iff there exists $i \in \N$ (the {\em $p$-dictator}) s.t. for any profile $\O$, and all $x \in \set{\0,\1}$, $O_i(p) = x$ iff $F(\O)(p) = x$.
I.e., there exists an agent whose definite opinion determines the group's definite opinion on $p$. If $F$ is $p$-dictatorial, with the same dictator on all issues $p \in \Atoms$, then it is called {\bf dictatorial}.
\item[$p$-oligarchic] iff there exists $C \subseteq \N$ (the {\em $p$-oligarchs}) s.t. $C\neq\emptyset$ and for any profile $\O$, and any value $x \in \set{\0,\1}$, $F(\O)(p) = x$ iff $O_i(p) = x$ for all $i\in C$.
%if for all $i\in C$, $O_i(p)=\1$, then $F(\O)(p) = \1$ and if for all $i\in C$, $O_i(p)=\0$, then $F(\O)(p) = \0$. 
I.e., there exists a group of agents whose definite opinions always determine the group's definite opinion on $p$. If $F$ is $p$-oligarchic, with the same oligarchs on all issues $p \in \Atoms$, then it is called {\bf oligarchic}.
\item[monotonic] iff, for all $p\in \Atoms$ and all $i\in\N$: 
for any profiles $\O, \O'$, if $\O =_{-i} \O'$: (i) if $O_i(p)\neq \1$ and $O'_i(p)\in\{\1,\ast\}$, then: if $F(\O)(p)=\1$, then $F(\O')(p)=\1$; and (ii) if $O_i(p)\neq \0$ and $O'_i(p)\in\{\0,\ast\}$, then: if $F(\O)(p)=\0$, then $F(\O')(p)=\0$. I.e., increasing support for a definite collective opinion does not change that collective opinion.
\item[independent] iff, for all $p\in \Atoms$, for any profiles $\O, \O'$: if for all $i\in \N, O_i(p) = O'_i(p)$, then $F(\O)(p)=F(\O')(p)$. I.e., the collective opinion on each issue is determined only by the individual opinions on that issue.
\item[neutral] iff, for all $p,q \in \Atoms$, for any profile $\O$: if for all $i\in\N$, $O_i(p)=O_i(q)$, then $F(\O)(p)=F(\O)(q)$. I.e., all issues are aggregated in the same manner.
\item[systematic] iff it is neutral and independent. I.e., the collective opinion on issue $p$ depends only on the individual opinions on this issue.
%\item[useless] iff for all $p\in\Atoms$, for all profiles $\O\in\AB$, $F(\O')(p)=\ast$. I.e., the rule does not have a constant abstention output.  
%We will say of a rule is non-useless when it is not the case
\item[responsive] iff for all $p\in\Atoms$, there exist profiles $\O, \O'$  such that $F(\O)(p)=\1$ and $F(\O')(p)=\0$. I.e., the rule allows for an issue to be accepted for some profile, and rejected for some other. 
\item[unbiased] iff for all $p \in \Atoms$, for any profiles $\O, \O'$ : if for all $i\in\N$, $O_i(p)= \1$ iff $O'_i(p)=\0$ (we say that $\O'$ is the ``reversed'' profile of $\O$), then $F(\O)(p)=\1$ iff $F(\O')(p)=\0$. I.e., reversing all and only individual opinions on an issue $p$ (from acceptance to rejection and from rejection to acceptance) results in reversing the collective opinion on $p$. 
\item[rational] iff for any profile $\O$, $F(\O)$ is consistent and closed. I.e., the aggregator preserves the constraints on individual opinions.
\end{description}
\end{definition}
%Anonymity implies non oligarchy (except for the case where the set of oligarchs is the whole set $N$) and thus non-dictatorship. 

\begin{example}
It is well-known that majority is not rational in general. The standard example is provided by the so-called discursive dilemma, represented by the BA structure $\tuple{\set{1,2,3},\set{p,q,r},r \lequiv (p \land q)}$. The profile consisting of $O_1 \models p \land q \land r$, $O_2 \models p \land \neg q \land \neg r$, $O_3 \models \neg p \land q \land \neg r$, returns an inconsistent majority opinion $\maj(\O) \models p \land q \land \neg r$ (cf. \cite{Grossi_2014}).
\end{example}

Finally, let us defined also the following property. The {\bf undecisiveness} of an aggregator $F$ on issue $p$ for a given aggregation structure is defined as the number of profiles which result in collective abstention on $p$:
\begin{align}
u(F)(p) & = |\set{\O \in \AB_c \mid F(\O)(p) = \ast}|.
\end{align}

%%%%%%%%%%%%%%%%%%%%%%%%%%%%%%%%%%%%%%%%%

\subsection{Characterizing quota rules}\label{subsec:char.quota}

As a typical example, consider the aggregator $\maj$: it is unanimous, anonymous, monotonic, systematic, responsive and unbiased, but, as mentioned above, it is not rational in general.
%, unless $\gamma$ is simple. \textcolor{blue}{The latter is not directly obvious and should be proven. See Fact 3.1 in my book.}
However, it can be shown (cf. \cite[3.1.1]{Grossi_2014}) that aggregation by majority is collectively rational under specific assumptions on the constraint:

\begin{fact} \label{fact:rat}
Let $\S$ be a BA structure with a simple agenda. Then $\maj$ is rational.
\end{fact}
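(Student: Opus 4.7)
The plan is to proceed by contradiction: assume there exists a profile $\O \in (\AB_c)^\N$ such that $\maj(\O)$ is either inconsistent with $\gamma$ or not closed, and derive a contradiction by exploiting simplicity of $\pm \I$ together with the fact that individual opinions are consistent and closed.

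First I would handle inconsistency. If $\maj(\O)$ is $\gamma$-inconsistent, then the literal set $S = \{p : \maj(\O)(p)=\1\} \cup \{\neg p : \maj(\O)(p)=\0\}$ is inconsistent with $\gamma$, so it contains a minimal $\gamma$-inconsistent subset $X$. By simplicity, $|X| \leq 2$. If $|X|=1$, say $X = \{\ell\}$, then $\gamma \models \neg \ell$, so by closure every $O_i$ assigns the value of $\neg \ell$, contradicting the fact that $\maj$ outputs $\ell$. If $|X|=2$, say $X = \{\ell_1, \ell_2\}$, then $\gamma \models \ell_1 \limp \neg \ell_2$ and $\gamma \models \ell_2 \limp \neg \ell_1$. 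By closure of individual opinions, every $i$ with $O_i \models \ell_1$ satisfies $O_i \models \neg \ell_2$, and conversely. Writing $\O(\ell)$ for the number of agents whose opinion entails $\ell$, this yields $|\O(\neg \ell_2)| \geq |\O(\ell_1)|$ and $|\O(\neg \ell_1)| \geq |\O(\ell_2)|$. Since $\maj$ outputs both $\ell_1$ and $\ell_2$, we also have $|\O(\ell_1)| > |\O(\neg \ell_1)|$ and $|\O(\ell_2)| > |\O(\neg \ell_2)|$. Chaining these inequalities gives $|\O(\neg \ell_2)| \geq |\O(\ell_1)| > |\O(\neg \ell_1)| \geq |\O(\ell_2)| > |\O(\neg \ell_2)|$, a contradiction.

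Next I would handle failure of closure, assuming $\maj(\O)$ is already consistent. Suppose some literal $\ell$ is entailed by $S \cup \{\gamma\}$ but $\maj(\O)$ does not include $\ell$; since consistency rules out $\maj(\O)$ carrying $\neg \ell$, the only possibility is $\maj(\O)(p) = \ast$ for the atom $p$ underlying $\ell$. Then $S \cup \{\neg \ell\}$ is $\gamma$-inconsistent and admits a minimal inconsistent subset $X$ which must contain $\neg \ell$ (else $S$ itself would be inconsistent). By simplicity $|X| \leq 2$. The case $|X|=1$ forces $\gamma \models \ell$, so by closure every $O_i$ carries $\ell$ and $\maj(\O)$ should too. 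In the case $|X| = \{\neg \ell, \ell'\}$ with $\ell' \in S$, we have $\gamma \models \ell' \limp \ell$ and contrapositively $\gamma \models \neg \ell \limp \neg \ell'$. Closure of individual opinions then gives $|\O(\ell)| \geq |\O(\ell')|$ and $|\O(\neg \ell')| \geq |\O(\neg \ell)|$, while $\ell' \in S$ gives $|\O(\ell')| > |\O(\neg \ell')|$. Combining, $|\O(\ell)| > |\O(\neg \ell)|$, contradicting the assumption that $\maj$ abstains on $p$.

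The main obstacle is the $|X| = 2$ case, which requires the delicate counting argument above, chaining majority strict inequalities with the closure-induced weak inequalities on individual opinions. Everything else is bookkeeping: identifying the minimal inconsistent set, splitting by its size, and invoking individual closure at the $|X|=1$ step.
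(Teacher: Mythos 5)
Your proof is correct and its core is the same as the paper's: for a two-element minimally inconsistent set you chain the strict majority inequalities with the weak inequalities that individual consistency and closedness impose ($|\O(\neg \ell_2)| \geq |\O(\ell_1)| > |\O(\neg \ell_1)| \geq |\O(\ell_2)|$), exactly the counting argument in the paper. In fact your write-up is more complete than the paper's proof, which only checks consistency and implicitly assumes all minimal $\gamma$-inconsistent sets have size exactly $2$: you additionally treat the singleton case $|X|=1$ and verify closedness of $\maj(\O)$, both of which are genuinely required by the paper's definition of a rational aggregator.
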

\begin{proof}
If the agenda $\pm\I$ is simple, then all minimally inconsistent sets have cardinality $2$, that is, are of the form $\set{\phi,\neg \phi}$ such that $\phi \models \neg \phi$ for $\phi,\psi \in \I$. W.l.o.g. assume $\phi = p_i$ and $\psi = p_j$. Suppose towards a contradiction that there exists a profile $\O$ such that $\maj(\O)$ is inconsistent, that is, $\maj(\O)(p_i) = \maj(\O)(p_j) = 1$, and $\phi \models \neg \psi$. By the definition of $\maj$ \eqref{eq:majast} it follows that $|\O(p_i)| > |\O(\neg p_i)|$ and $|\O(p_j)| > |\O(\neg p_j)|$. Since $p_i \models \neg p_j$ by assumption, and since individual opinions are consistent and closed, $|\O(\neg p_j)| \geq |\O(p_i)|$ and $|\O(\neg p_i)| \geq |\O(p_j)|$. From the fact that $|\O(p_i)| > |\O(\neg p_i)|$ we can thus conclude that $|\O(\neg p_j)|> |\O(p_j)|$. Contradiction.
\end{proof}

May's theorem \cite{May_1952} famously shows that for preference aggregation, the majority rule is in fact the \emph{only} aggregator satisfying a specific bundle of desirable properties. A corresponding characterization of the majority rule is given in judgment aggregation \emph{without abstention}: when the agenda is simple, the majority rule is the only aggregator which is rational, anonymous, monotonic and unbiased \cite[Th. 3.2]{Grossi_2014}. We give below a novel characterization theorem, which takes into account the possibility of abstention (both at the individual and at the collective level). To the best of our knowledge, this is the first result of this kind in the literature on judgment and binary aggregation with abstention.  

%In judgment aggregation with abstention, this is not the case: 
%A quota rule $F$ is unbiased if and only if it is symmetric.
%We give below a characterization of this class of rules in term of axiomatic properties. Note that %the majority rule is now the rule in this class with the smallest quota, as an aggregator using a %smallest quota wouldn't be well-defined. 

\smallskip

We first prove the following lemma:
\begin{lemma} \label{lemma:min}
Let $F$ be a uniform and symmetric quota rule for a given $\S$. The following holds:
%\item $q_\1 = q_0 =1$ if and only if $F = \arg\max_G u(G)(p)$ for all $p \in \Atoms$;
$\frac{1}{2}< q_\1 = q_0 \leq \frac{|N|+1}{2|N|}$ if and only if $F = \arg\min_G u(G)(p)$, for all $p \in \Atoms$. 
\end{lemma}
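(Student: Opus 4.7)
The plan is to leverage a simple monotonicity observation about the undecisiveness $u(F_q)(p)$ as a function of the quota $q$, and then locate the minimum. Unpacking \eqref{eq:quotarules} for a uniform symmetric rule with $q_\1 = q_\0 = q$, $F(\O)(p) = \ast$ holds iff both $|\O(p)| < \lceil q \cdot |\O(\pm p)| \rceil$ and $|\O(\neg p)| < \lceil q \cdot |\O(\pm p)| \rceil$. Since $\lceil q \cdot k \rceil$ is non-decreasing in $q$ for every $k$, the set of profiles on which $F_q$ abstains is inclusion-monotone in $q$; consequently $u(F_q)(p)$ is non-decreasing in $q$, so the minimum among admissible rules (which, by \eqref{eq:constraint}, require $q > 1/2$) is attained at the lowest feasible quotas.

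For the ``if'' direction, I appeal to Example~\ref{example:maj}: every $q$ in the range $(1/2, (|N|+1)/(2|N|)]$ yields, on every profile, exactly the same outcome as $\maj$. Since these are the smallest admissible quotas and $u(F_q)(p)$ is monotone in $q$, each such $F$ achieves the minimum undecisiveness on every issue, hence $F \in \arg\min_G u(G)(p)$ for all $p \in \Atoms$.

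For the ``only if'' direction, I would show that any $q > (|N|+1)/(2|N|)$ yields strictly greater undecisiveness on some (and, by uniformity, every) issue than $\maj$. The natural witness is a profile $\O \in (\AB_c)^\N$ in which no agent abstains on $p$ and the non-abstainers split into $\lceil (|N|+1)/2 \rceil$ accepters against $\lfloor (|N|-1)/2 \rfloor$ rejecters---the tightest possible majority. For such $\O$ we have $\maj(\O)(p) = \1$, while $q > (|N|+1)/(2|N|)$ gives $q \cdot |N| > (|N|+1)/2$ and hence $\lceil q \cdot |N| \rceil > |\O(p)|$, so $F_q(\O)(p) = \ast$. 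Combined with the monotonicity observation, this yields $u(F_q)(p) > u(\maj)(p)$, ruling $F_q$ out of $\arg\min_G u(G)(p)$.

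The main obstacle is ensuring that the witness profile actually lives in $(\AB_c)^\N$: each agent must be able to adopt a consistent and closed opinion realizing the prescribed verdict on $p$. For generic BA structures, each agent's verdict on $p$ can be freely chosen by adjusting their opinions on the remaining issues so as to satisfy the integrity constraint $\gamma$; in the degenerate case where $\gamma$ forces all agents to agree on $p$, every rule has $u(\cdot)(p) = 0$ and the minimization is trivial, so the lemma holds vacuously. A secondary delicate point is bookkeeping for $|\N|$ even versus odd when evaluating the ceiling at the critical quota, which should be handled in two cases.
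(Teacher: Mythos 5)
Your monotonicity observation and the ``if'' half are sound, and they are a reasonable way of making explicit what the paper only sketches (the paper argues by a chain of asserted equivalences: quota in the stated interval iff the threshold $\lceil q\cdot|\O(\pm p)|\rceil$ equals the majority threshold $\lceil(|\O(\pm p)|+1)/2\rceil$ for every turnout, iff the rule abstains exactly when the non-abstainers split in half, iff it minimizes undecisiveness). The genuine gap is in your ``only if'' half. The inference ``$q\cdot|N| > (|N|+1)/2$, hence $\lceil q\cdot|N|\rceil > |\O(p)|$'' fails when $|N|$ is even: there $(|N|+1)/2$ is not an integer, $|\O(p)|=\lceil(|N|+1)/2\rceil = |N|/2+1$, and $q\cdot|N| > |N|/2+1/2$ only gives $\lceil q\cdot|N|\rceil \geq |N|/2+1 = |\O(p)|$, so $F_q$ accepts your witness profile exactly as $\maj$ does. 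This is not a matter of bookkeeping that a case split repairs: for even $|N|$ and $q$ slightly above the bound there is no witness at all. For instance, with $|N|=4$ and $q=13/20 > 5/8$, one checks $\lceil qk\rceil=\lceil(k+1)/2\rceil$ for every turnout $k\in\{1,2,3,4\}$, so $F_q$ coincides with $\maj$ on every profile, has the same undecisiveness, and is thus also a minimizer although its quota lies outside $(1/2,(|N|+1)/(2|N|)]$. More generally, when every turnout $k\leq|N|$ is realizable (e.g.\ $\gamma=\top$), the quota inducing the majority thresholds form the interval $(1/2,\,|N|/(2(|N|-1))]$ for even $|N|$, a strict superset of the lemma's interval; the stated interval is exact only for odd $|N|$. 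So the right-to-left direction, as you (and the lemma, read literally) state it, can only be proved for odd $|N|$, or after weakening the conclusion to ``$F$ induces the majority thresholds, i.e.\ $F$ abstains exactly on ties and coincides with $\maj$ as an aggregator''---which is effectively how the paper's steps (b)--(d) read, given its own caveat (Example~\ref{example:maj}) that $\maj$ should be treated as a class of quota rules rather than a single quota value.

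Two smaller points. First, your strictness argument needs the witness profile to exist in $(\AB_c)^\N$, which is a genuine assumption on $\gamma$; the paper makes the same tacit assumption (its step (c) presupposes that tied and near-tied profiles are realizable), so flagging it is right, but your fallback is not: in the degenerate case where $\gamma$ fixes everyone's opinion on $p$, every quota rule has $u(\cdot)(p)=0$, so every rule is a minimizer and the ``only if'' direction is then false rather than vacuously true. Second, since the minimization $\arg\min_G$ must be read as ranging over uniform symmetric quota rules (over all aggregators a constant rule would have zero undecisiveness), it is worth saying so explicitly before invoking monotonicity in $q$.
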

\begin{proof}
%\fbox{Claim 1} Left-to-right. If $F$ is a uniform symmetric quota rule such that $q_\1 = q_0 %=1$, then for any $\O \in \AB_c$ and $p \in \Atoms$, $u(F)(p) = \ast$ if and only if %\O(p),%%\O(\neg p) \neq \O(\pm p)$. By Definition of \ref{def:quota}, no symmetric quota can %do better, so we can conclude that $F$ maximizes undecisiveness ($F = \arg\max_G u(G)(p)$ %for all $p \in \Atoms$). Right-to-left. Holds for larger class of quota.
%\fbox{Claim 2} 
The claim is proven by the following series of equivalent statements. 
(a) A uniform and symmetric quota rule $F$ has quota such that $\frac{1}{2}< q_\1 = q_0 \leq \frac{|N|+1}{2|N|}$. 
(b) A uniform and symmetric quota rule $F$ has quota such that $\lceil q_\1(p)|\O(\pm p)| \rceil = \lceil q_\0(p) |\O(\pm p)| \rceil = \left\lceil \frac{|\O(\pm p)| + 1}{2}\right\rceil$ for any profile $\O$ and issue $p$. 
(c) For any $\O \in \AB_c$ and $p \in \Atoms$, $u(\O)(p) = \ast$ if and only if $\O(p) = \O(\neg p)$, that is, an even number of voters vote and the group is split in half. 
%By Definition of \ref{def:quota}, no symmetric quota can do better, so the latter is %equivalent to the claim that 
(d) $F = \arg\min_G u(G)(p)$, for all $p \in \Atoms$.
\end{proof}
That is, the quota rule(s) corresponding to the majority rule (Example \ref{example:maj}) is precisely the rule that minimizes undecisiveness.

We can now state and prove the characterization result:
\begin{theorem}\label{thm:quotarules}
Let $F: (\AB_{c})^\N \to \AB$ be an aggregator for a given $\S$. The following holds: 
\begin{enumerate}
\item $F$ is a quota rule if and only if it is anonymous, independent, monotonic, and responsive; 
\item $F$ is a uniform quota rule if and only if it is a neutral quota rule;
\item $F$ is a symmetric quota rule if and only if it is an unbiased quota rule; 
\item $F$ is the majority rule $\maj$ if and only if it is a uniform symmetric quota rule which minimizes undecisiveness. 
\end{enumerate}
\end{theorem}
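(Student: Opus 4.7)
The plan is to handle the four parts in sequence. Part 1 carries the technical core: I would first verify by inspection of Definition \ref{def:quota} that every quota rule is anonymous, independent, monotonic, and responsive, and then invert the construction to recover quotas from an aggregator satisfying these four properties. Parts 2 and 3 follow by short arguments matching uniformity with neutrality, and symmetry with unbiasedness, once the quota structure is in hand. Part 4 then drops out by combining the previous parts with Example \ref{example:maj} and Lemma \ref{lemma:min}.

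For the converse direction of Part 1, anonymity and independence reduce $F(\O)(p)$ to a function of the triple $(|\O(p)|, |\O(\neg p)|, |\O(\pm p)|)$. Monotonicity then renders the acceptance region closed under the three acceptance-strengthening transitions $\0\to\1$ (within a fixed number of non-abstainers), $\ast\to\1$ (increasing non-abstainers by one), and $\0\to\ast$ (decreasing non-abstainers by one), and symmetrically for rejection. The within-slice transition $\0\to\1$ already yields a threshold $k_p(t)$ on $|\O(p)|$ for each number $t$ of non-abstainers, and the two cross-slice transitions force $k_p$ to scale with $t$ in a way that admits the representation $\lceil q_\1(p)\cdot t\rceil$ for a unique least rational $q_\1(p)\in(0,1]$; analogously for $q_\0(p)$ on the rejection side. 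Responsiveness ensures that the extracted quotas are non-trivial, and the compatibility condition \eqref{eq:constraint} is forced by well-definedness of $F$ as a function into $\AB$, since otherwise some profile would have to be mapped simultaneously to $\1$ and $\0$.

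Parts 2 and 3 then follow almost immediately. For Part 2, uniformity of the quota function $q_x$ says that acceptance and rejection thresholds do not depend on the issue, which is exactly neutrality specialised to a quota rule. For Part 3, reversing individual opinions on $p$ swaps $|\O(p)|$ and $|\O(\neg p)|$, so the quota rule is unbiased precisely when $q_\1(p)=q_\0(p)$. For Part 4, Example \ref{example:maj} exhibits $\maj$ as the class of uniform symmetric quota rules with $\frac{1}{2}<q_\1=q_\0\leq\frac{|N|+1}{2|N|}$, while Lemma \ref{lemma:min} shows that this range of quotas is precisely the one minimising undecisiveness; combining these two observations with Parts 2 and 3 yields that $F=\maj$ iff $F$ is a uniform symmetric quota rule minimising undecisiveness.

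The main obstacle is the converse direction of Part 1, specifically showing that the threshold function $k_p(t)$ extracted from monotonicity has the quota form $\lceil q\cdot t\rceil$ for a single rational $q\in(0,1]$ rather than being an arbitrary monotone step function of $t$. The key technical work is to exploit the interplay between the within-slice transition ($\0\to\1$) and the cross-slice transitions ($\0\to\ast$ and $\ast\to\1$) to pin down this linearity; the finiteness of $\N$ is what allows the threshold to be encoded by a rational quota of bounded denominator.
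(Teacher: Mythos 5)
Your plan reproduces the paper's decomposition almost exactly: the left-to-right direction of part 1 by inspection, parts 2 and 3 read off the definitions, and part 4 obtained by combining Example \ref{example:maj} with Lemma \ref{lemma:min}. The problem is the converse of part 1, at precisely the point you yourself flag as ``the key technical work'' but never carry out. Anonymity and independence do reduce $F(\cdot)(p)$ to a function of $(|\O(p)|,|\O(\neg p)|)$, and monotonicity does give, for each number $t$ of non-abstainers, a threshold $k_p(t)$ together with the cross-slice constraints $k_p(t)\leq k_p(t+1)\leq k_p(t)+1$ (from the $\0\to\ast$ and $\ast\to\1$ transitions). But these constraints are strictly weaker than the ceiling-linear form $k_p(t)=\left\lceil q_\1(p)\cdot t\right\rceil$, so the claimed ``forcing of linearity'' does not follow. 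Concretely, take $\Atoms=\set{p}$, $\gamma=\top$, $|\N|=4$ and $k_p(1)=1$, $k_p(2)=1$, $k_p(3)=2$, $k_p(4)=3$: this is non-decreasing with unit increments, so the acceptance region $\set{\O : |\O(p)|\geq k_p(|\O(\pm p)|)}$ is closed under all three acceptance-preserving transitions; pairing it with the rejection rule ``output $\0$ iff $|\O(p)|=0$ and $|\O(\pm p)|\geq 1$'' yields an anonymous, independent, monotonic and responsive aggregator. Yet no single $q\in(0,1]$ can satisfy both $\lceil 2q\rceil=1$ (which forces $q\leq \tfrac12$) and $\lceil 4q\rceil=3$ (which forces $q>\tfrac12$), so this aggregator is not a quota rule in the sense of Definition \ref{def:quota}. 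Hence the step your plan rests on cannot be completed from the four listed properties alone, and as written the proposal does not prove claim 1.

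For comparison, the paper's own right-to-left argument proceeds differently: it defines $q_\1$ as the minimal value of $|\O(p)|/|\O(\pm p)|$ over accepting profiles and then iterates monotonicity to conclude that every profile meeting that ratio is accepted. Your slice-by-slice reconstruction is more explicit than the paper about where the real difficulty lies, namely reconciling thresholds across different numbers of non-abstainers; but note that the same example puts pressure on the paper's iteration argument too (its minimal accepting ratio is $\tfrac12$, attained at one acceptor versus one rejector, while two acceptors versus two rejectors is not accepted). So this is not a difficulty you could have discharged simply by following the paper's route: a complete proof of part 1 needs either an additional idea tying together profiles with different numbers of abstainers, or a strengthening of the hypotheses, or a broader notion of quota rule (e.g., thresholds that may depend on $|\O(\pm p)|$). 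Parts 2--4 of your plan are fine and coincide with the paper's treatment.
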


\begin{proof}
\smallskip
\fbox{Claim 1}
Left-to-right: Easily checked.
Right-to-left: Let $F$ be an anonymous, independent, monotonic, and responsive aggregator. By \textit{anonymity} and \textit{independence}, for any $p\in\Atoms$, and any $\O\in\AB_c$, the only information determining the value of $F(O)(p)$ are the integers $|\O(p)|$ and $|\O(\neg p)|$. 
%\item  by non-uselessness it is not the case that for all profiles $\O\in\AB$, $F(\O)(p)=*\}$. We proceed by case distinction:

By \textit{responsiveness}, there exists a non-empty set of profiles $S^\1=\{\O\in\AB|F(\O)(p)=\1\}$. Pick $\O$ to be any profile in $S^\1$ with a minimal value of $\frac{|\O(p)|}{|\O(\pm p)|}$ and call this value $q_\1$. Now let $\O'$ be any profile such that $\O'=_{-i}\O$ and $\frac{|\O'(p)|}{|\O'(\pm p)|}>q_\1$. This implies that $O_i(p)=\0$ and $O'_i(p)=\1$. By \textit{monotonicity}, it follows that $F(\O')(p)= \1$. 
By iterating this argument a finite number of times we conclude that whenever $\frac{|\O(p)|}{|\O(\pm p)|} \geq q_\1$, we have that $F(\O)(p)=\1$. 
Given that $q_\1$ was defined as a minimal value, we conclude also that if $F(\O)(p)=1$, then $\frac{\O(p)}{\O(p^\pm)}\geq q_\1$. The argument for $q_\0$ is identical.
%THAT WAS THE PART TO FIND THE ABSOLUTE QUOTA, ASSUMING WE DON'T START WITH A GIVEN NUMBER OF AGENTS: Fix $q_1$ to be the value such that for any aggregation structure $\S$ and any profile $\O$ in $\AB$: $\frac{|\O(p)|-1}{|\O(p^\pm)|}< q_\1 < q \in $: $F(\O)(p)=\1$ iff $\frac{|\O'(p)|}{|\O'(p^\pm|)}> q_\1$.  
%By \textit{responsiveness} again, there also exists a non-empty set of profiles $S^\0$ such %that $S^\0=\{\O\in\AB|F(\O)(p)=\0\}$. Take $\O$ to be one of the profiles in $S^\0$ with %minimal value of $\frac{|\O(\neg p)|}{|\O(p^\pm)|}$ and let us call this value $q_\0$. By %\textit{monotonicity} and the same reasoning steps as above, we obtain that for any profile %$\O$ in $\AB$: $F(\O)(p)=\0$ iff $\frac{|\O(\neg p)|}{|\O(p^\pm|)}\geq q_\0$. 

\fbox{Claims 2 \& 3} follow straightforwardly from the definitions of uniform quota rule (Definition \ref{def:quota}) and of neutrality (Definition \ref{def:properties}) and, respectively, from the definitions of symmetric quota rules (Definition \ref{def:quota}) and of unbiasedness (Definition \ref{def:properties}) .

\fbox{Claim 4}
Left-to-right. Recall that $\maj$ is defined by quota $\frac{1}{2}< q_\1 =q_\0 \leq \frac{1}{2}+\frac{1}{|N|}$ (Example \ref{example:maj}). It is clear that $\maj$ is uniform and symmetric. The claim then follows by Lemma \ref{lemma:min}. 
Right-to-left. By Lemma \ref{lemma:min} if an aggregator minimizes undecisiveness then its quota are set as $\frac{1}{2}< q_\1 =q_\0 \leq \frac{1}{2}+\frac{1}{|N|}$. These quota define $\maj$ (Example \ref{example:maj}).
\end{proof}
By the above theorem and Fact \ref{fact:rat}, it follows that, on simple agendas, majority is the only rational aggregator which is also responsive, anonymous, systematic and monotonic.

%%%%%%%%%%%%%%%%%%%%%%%%%%%%%%%%%%%%%%%%%%%%%%%%%%%%%%

\subsection{Impossibility in Binary Aggregation with Abstentions}

The following is a well-know impossibility result concerning binary aggregation with abstentions:
\begin{theorem}[\cite{Dokow_2010,Dietrich_2007}]\label{thm:imp.agg.abst}
Let $\S$ be a BA structure whose agenda is path connected and evenly negatable. Then if an aggregator $F: (\AB_{c})^\N \to \AB$ is independent, unanimous and collectively rational, then it is oligarchic.
\end{theorem}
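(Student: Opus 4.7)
The plan is to follow the classical ``ultrafilter-style'' argument adapted to the abstention setting, working with families of decisive coalitions rather than directly with the aggregator.

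First, by independence, for every issue $p \in \I$ and every $x \in \set{\0,\1}$, the value $F(\O)(p) = x$ depends only on the coalition $\O(p) = \set{i \mid O_i(p) = \1}$ (respectively on $\O(\neg p)$). I would therefore define, for each $p$, the family of \emph{accepting} coalitions
\[
\mathcal{W}_p^{\1} = \set{C \subseteq \N \mid \exists \O \in (\AB_c)^{\N}, \ \O(p) = C \text{ and } F(\O)(p) = \1},
\]
and analogously $\mathcal{W}_p^{\0}$ for rejection. The goal is to show that there is a single non-empty $C^{\star} \subseteq \N$ such that, for all $p$ and all profiles, $F(\O)(p) = \1$ iff $O_i(p) = \1$ for every $i \in C^{\star}$ (and symmetrically for $\0$); $C^{\star}$ will be the set of oligarchs.

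The next step is to establish three closure properties of the families $\mathcal{W}_p^{x}$, using the three hypotheses in turn. (i) \emph{Non-triviality/unanimity}: $\N \in \mathcal{W}_p^{x}$, directly from the unanimity axiom. (ii) \emph{Monotonicity (upward closure)}: if $C \in \mathcal{W}_p^{x}$ and $C \subseteq C'$, then $C' \in \mathcal{W}_p^{x}$. This is the point at which \emph{even negatability} enters: one constructs, from a witnessing profile for $C$, a profile in which the extra agents in $C' \setminus C$ switch their vote on $p$ while remaining consistent and closed, by exploiting a minimal inconsistent set admitting an even-size reversal. (iii) \emph{Intersection closure}: if $C_1, C_2 \in \mathcal{W}_p^{x}$ then $C_1 \cap C_2 \in \mathcal{W}_p^{x}$. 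This is the heart of the proof and relies on collective rationality together with the existence of an appropriate three-valued ``triangle'' profile on a minimal inconsistent set, arranged so that two of its projections force $F$ to decide on $p$ via $C_1$ and $C_2$, while the third forces the intersection to decide.

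The final step is to propagate these families across issues. Using \emph{path-connectedness}, I would show that conditional entailment $p \models^c q$ implies $\mathcal{W}_p^{\1} \subseteq \mathcal{W}_q^{\1}$: given $C \in \mathcal{W}_p^{\1}$ witnessed by $\O$, we extend the individual opinions on the conditioning set $X$ in a uniform way and invoke independence plus rationality to force $F(\O')(q) = \1$. By iterating along a conditional entailment path between any two literals (and using closure under negation coming from even negatability), one concludes $\mathcal{W}_p^{\1} = \mathcal{W}_q^{\1} =: \mathcal{W}$ for all $p, q$, and similarly for rejection. A family $\mathcal{W}$ of subsets of a finite $\N$ that contains $\N$ and is closed under supersets and intersections has a unique minimal element $C^{\star} = \bigcap \mathcal{W}$, which is non-empty; this is exactly the oligarchy.

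The main obstacle, as usual in Dokow--Dietrich style theorems, will be the intersection closure step: choosing the right profiles over a minimal inconsistent set so that collective rationality, independence and even negatability together squeeze the aggregator into declaring $C_1 \cap C_2$ decisive. Once that combinatorial lemma is in place, propagation via path-connectedness and the final identification of the oligarchs are largely bookkeeping.
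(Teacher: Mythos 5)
First, a point of reference: the paper does not prove this theorem at all --- it is imported, with citation, from Dokow--Holzman and Dietrich--List, so there is no in-paper proof to compare against. Your plan is the standard decisive-coalition (filter) route that those papers follow, so the overall strategy is the right one; the question is whether your sketch actually closes the argument, and it does not yet.

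The most concrete gap is at the very first step. With abstentions, independence only guarantees that $F(\O)(p)$ is determined by the full three-valued vector $\left(O_i(p)\right)_{i\in\N}$, i.e.\ by the pair $\left(\O(p),\O(\neg p)\right)$, not by the accepting coalition $\O(p)$ alone. Your family $\mathcal{W}_p^{\1}$ is defined existentially (``there exists a profile with $\O(p)=C$ and $F(\O)(p)=\1$''), so from $C\in\mathcal{W}_p^{\1}$ you cannot conclude that every profile whose accepters are exactly $C$ (or even a superset of $C$) yields acceptance: two such profiles may differ in who abstains versus rejects, and independence does not identify them. Making ``decisive coalition'' well defined in the abstention setting --- e.g.\ by working with pairs of disjoint coalitions, or by first proving that only the accepter/rejecter split matters --- is exactly where a substantial part of the Dokow--Holzman argument lives, and it is skipped here. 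Beyond that, the two load-bearing lemmas (upward closure and closure under intersection) are announced rather than proved, and you yourself flag the intersection step as the main obstacle; note also that the attribution of even negatability to the upward-closure step is suspect --- in the standard arguments it is the intersection/contraction step that consumes even negatability, while monotonicity-type facts come from unanimity, rationality and conditional entailment. Finally, a small but real point: containing $\N$ plus closure under supersets and intersections does not make $\bigcap\mathcal{W}$ non-empty (take $\mathcal{W}=2^{\N}$); you additionally need $\emptyset\notin\mathcal{W}$, which follows from unanimity applied to an all-rejecting profile. As it stands, then, the proposal is a correct roadmap of the known proof, but not a proof: the abstention-specific bookkeeping and the central combinatorial lemma are missing.
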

We will use this result to illustrate how impossibility results from binary aggregation with abstentions apply to delegable proxy voting on binary issues.

%%%%%%%%%%%%%%%%%%%%%%%  proxy

\section{Liquid Democracy as Binary Aggregation} \label{sec:proxy}

In this section we provide an analysis of liquid democracy by embedding it in the theory of binary aggregation with abstentions presented in the previous section. To the best of our knowledge, this is the first attempt at providing an analysis of delegable proxy voting using social-choice theoretic tools, with the possible exception of \cite{greenarmytage_delegable}. 

In what follows we will often refer to delegable proxy voting/aggregation simply as proxy voting/aggregation.

\subsection{Binary Aggregation via Delegable Proxy}

In binary aggregation with proxy, agents either express an acceptance/rejection opinion or \emph{delegate} such opinion to a different agent. 

\subsubsection{Proxy Opinions and Profiles}

Let a BA structure $\S$ be given and assume for now that $\gamma = \top$, that is, all issues are logically independent. An opinion $O: \Atoms \to \set{\0,\1} \cup \N$ is an assignment of either a truth value or another agent to each issue in $\Atoms$, such that $O_i(p) \neq i$ (that is, self-delegation is not an expressible opinion).
We will later also require proxy opinion to be individually rational, in a precise sense (Section \ref{sec:indirat}). 
For simplicity we are assuming that abstention is not a feasible opinion in proxy voting, but that is an assumption that can be easily lifted in what follows.

We call functions of the above kind {\em proxy opinions} to distinguish them from standard (binary) opinions, and we denote by $\mathcal{P}$ the set of all proxy opinions, $\mathcal{P}_c$ the set of all consistent proxy opinions, $\mathcal{P}^\N$ being the set of all proxy profiles.  

%\subsection{Aggregation of Proxy Opinions}

\subsubsection{Delegation Graphs}

Each profile $\O$ of proxy opinions ({\em proxy profile} in short) induces a delegation graph $G^\O = \langle \N, \set{R_p}_{p \in \I}\rangle$ where for $i, j \in \N$:
\begin{align}
iR_p j & \IFF
\left\{
\begin{array}{ll}
O_i(p) = j & \mbox{if $i \neq j$} \\
O_i(p) \in \set{\0, \1} & \mbox{otherwise}
\end{array}
\right.
\end{align}
The expression $iR_pj$ stands for ``$i$ delegates her vote to $j$ on issue $p$''. 
Each $R_p$ is a so-called functional relation. It corresponds to the graph of an endomap on $\N$. So we will sometimes refer to the endomap $r_p: \N \to \N$ of which $R_p$ is the graph. Relations $R_p$ have a very specific structure and can be thought of as a set of trees whose roots all belong to cycles (possibly loops).

The weight of an agent $i$ w.r.t. $p$ in a delegation graph $G^\O$ is given by its indegree with respect to $R^*_p$ (i.e., the reflexive and transitive closure of $R_p$):\footnote{
We recall that the reflexive transitive closure $R^*$ of a binary relation $R \subseteq \N^2$ is the smallest reflexive and transitive relation that contains $R$.
} 
 $w^\O_i(p) = |\set{j \in \N \mid j R^*_p i}|$. This definition of weight makes sure that each individual carries the same weight, independently of the structure of the delegation graph. Alternative definitions of weight are of course possible and we will come back to this issue later.\footnote{See also footnote \ref{footnote:weight} below.} 
 
 For all $p \in \I$, we also define the function $g_p: \N \rightarrow \wp(\N)$ such that $g_p(i) = \set{j \in \N \mid j R^*_p i \AND \nexists k: jR_p k}$. The function associates to each agent $i$ (for a given issue $p$), the (singleton consisting of the) last agent reachable from $i$ via a path of delegation on issue $p$, when it exists (and $\emptyset$ otherwise). Slightly abusing notation we will use $g_p(i)$ to denote an agent, that is, the {\em guru} of $i$ over $p$ when $g_p(i) \neq \emptyset$. If $g_p(i) = \set{i}$ we call $i$ a {\em guru} for $p$. Notice that $g_p(i) = \set{i}$ iff $r_p(i) = i$, that is, $i$ is a guru of $p$ iff it is a fixpoint of the endomap $r_p$.
 
If the delegation graph $G^\O$ of a proxy profile $\O$ is such that, for some $R_p$, there exists no $i \in N$ such that $i$ is a guru of $p$, we say that graph $G^\O$ (and profile $\O$) is {\em void} on $p$. Intuitively, a void profile on $p$ is a profile where no voter expresses an opinion on $p$, because every voter delegates her vote to somebody else.

Given a BA structure $\S$, a proxy aggregation rule (or proxy aggregator) for $\S$ is a function $\pv:\mathcal{P}^\N\to\AB$ that maps every proxy profile to one collective incomplete opinion. As above, $\pv(\O)(p)$ denotes the outcome of the aggregation on issue $p$.

\subsubsection{Proxy Aggregators}

The most natural form of voting via delegable proxy is a proxy version of the majority rule we discussed in Section \ref{sec:preliminaries}:\footnote{On the importance of majority decisions in the current implementation of liquid democracy by liquid feedback cf. \cite[p.106]{liquid_feedback}.}
\begin{align}\label{eq:proxymajast}
\pv_{\maj}(\O)(p) =
\begin{cases}
 \1 &  \mbox{ if }  \sum_{i \in \O(p)} w^\O_i(p) > \sum_{i \in \O(\neg p)}w^\O_i(p) \\
 \0 &  \mbox{ if }  \sum_{i \in \O(\neg p)} w^\O_i(p) > \sum_{i \in \O(p)}w^\O_i(p) \\
 \ast & \mbox{ otherwise }
\end{cases}
\end{align}
%$\maj_\ast(\O)(p)= \1$ if and only if $|\O(p)| \geq \frac{|\set{i \in \N \mid O_i(p) \in \set{\0, \1}}|+1}{2}$.
That is, an issue is accepted by proxy majority in profile $\O$ if the sum of the weights of the agents who accept $p$ in $\O$ exceeds the majority quota, it is rejected if the sum of the weights of the agents who reject $p$ in $\O$ exceeds the majority quota, and it is undecided otherwise. It should be clear that $\sum_{i \in \O(p)} w^\O_i(p) = |\{i \in \N | O_{g_i}(p)= \1 \}|$ (and similarly for $\neg p$), that is, the sum of the weights of the gurus accepting (rejecting) $p$ is precisely the cardinality of the set of agents whose gurus accept (reject) $p$. 

In general, it should be clear that for any quota rule $F: \AB_c \to \AB$ a proxy variant $\pv_F$ of $F$ can be defined via an obvious adaptation of \eqref{eq:proxymajast}.

\medskip

To fix intuitions further about proxy voting it is worth discussing another example of aggregator, {\em proxy dictatorship}. It is defined as follows, for a given $d \in \N$ (the dictator) any proxy profile $\O$ and issue $p$: 
\begin{align}\label{eq:proxyd}
\pv_{d}(\O)(p) =
\begin{cases}
 \O_{g_p(d)} &  \mbox{ if }  g_p(d) \neq \emptyset \\
 \ast & \mbox{ otherwise }
\end{cases}
\end{align} 
That is, in a proxy dictatorship, the collective opinion is the opinion of the guru of the dictator, when it exists, and it is undefined otherwise.

%%%%%%%%%%%%%%%%%%%%%%%%%%%%%%%%%%%%%%%%%%%%%%%%%%

\subsection{Two Issues of Delegable Proxy}

\subsubsection{Cycles and Abstentions} \label{sec:proxyabs}

It should be clear from the definition of proxy aggregators like $\pv_\maj$, that such aggregators rely on the existence of gurus in the underlying delegation graphs. If the delegation graph $R_p$ on issue $p$ contains no guru, then the aggregator has access to no information in terms of who accepts and who rejects issue $p$. To avoid bias in favor of acceptance or rejection, such situations should therefore result in an undecided collective opinion. That is for instance the case of $\pv_\maj$. However, such situations may well be considered problematic, and the natural question arises therefore of how likely they are, at least in principle.
\begin{fact} \label{fact:cycles}
Let $\A$ be a BA structure where $\gamma = \top$ (i.e., issues are independent).
If each proxy profile is equally probable (impartial culture assumption), then the probability that, for each issue $p$, the delegation graph $R_p$ has no gurus tends to $\frac{1}{e^2}$ as $n$ tends to infinity. 
%If each delegation graph $R_p$ on $p$ is %equally probable, then the probability %that $R_p$ has no gurus tends to %$\frac{1}{e}$ as $n$ tends to infinity.
\end{fact}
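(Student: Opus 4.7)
The plan is to reduce the event that $R_p$ has no gurus to an elementary independent-trials computation. First I would unfold definitions: by the construction of $R_p$ from $\O$, an agent $i$ is a guru for $p$ iff $r_p(i) = i$ iff $O_i(p) \in \set{\0, \1}$. Hence the graph $R_p$ contains no guru iff every agent delegates her vote on $p$, i.e., $O_i(p) \in \N \setminus \set{i}$ for all $i \in \N$.

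Next, since $\gamma = \top$, no consistency constraint couples the atomic choices, and the only restriction is $O_i(p) \neq i$. Under impartial culture this means each $O_i(p)$ is independently uniform on the $(n+1)$-element set $\set{\0, \1} \cup (\N \setminus \set{i})$. The probability that a fixed agent $i$ delegates on $p$ (rather than voting directly) is therefore
$$
\Pr(O_i(p) \in \N \setminus \set{i}) = \frac{n-1}{n+1},
$$
and by independence across $i$ the probability that $R_p$ has no guru equals $\left(\frac{n-1}{n+1}\right)^n$. Because $\gamma = \top$, the analogous fact holds issue by issue.

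Finally, I would compute the limit
$$
\left(\frac{n-1}{n+1}\right)^n = \left(1 - \frac{2}{n+1}\right)^n \longrightarrow e^{-2}
$$
as $n \to \infty$, by writing $(1 - \tfrac{2}{n+1})^n = (1 - \tfrac{2}{n+1})^{n+1} \cdot (1 - \tfrac{2}{n+1})^{-1}$ and applying the standard limit $(1 - x/k)^k \to e^{-x}$ together with the fact that the correction factor tends to $1$.

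I do not anticipate a genuine obstacle here: the entire argument rests on recognising that under $\gamma = \top$ the impartial-culture measure factors across agents (and across issues), so the ``no guru'' event for a single issue is simply the intersection of $n$ independent Bernoulli events. The only point worth flagging carefully is the cardinality $n+1$ of the choice set — counting the two truth values alongside the $n-1$ admissible delegates — since miscounting this is what would spoil the $1/e^2$ asymptotics.
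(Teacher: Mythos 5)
Your proposal is correct and follows essentially the same route as the paper: count the $n+1$ admissible opinions per agent, note that $n-1$ of them are delegations, obtain $\bigl(\tfrac{n-1}{n+1}\bigr)^n$ for the no-guru event on a single issue, and pass to the limit $e^{-2}$. Your limit computation via $\bigl(1-\tfrac{2}{n+1}\bigr)^{n+1}\cdot\bigl(1-\tfrac{2}{n+1}\bigr)^{-1}$ is in fact a cleaner presentation than the paper's own index-shifting manipulation, but it is the same argument in substance.
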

\begin{proof}
The claim amounts to computing the probability that a random proxy profile $\O$ induces a delegation graph $R_p$ that does not contain gurus (or equivalently, whose endomap $r_p: \N \to \N$ has no fixpoints) as $n$ tends to infinity.
Now, for each agent $i$, the number of possible opinions on a given issue $p$ (that is, functions $O: \set{p} \to \set{\0,\1} \cup \N$) is $|(\N \backslash \set{i}) \cup \set{\0,\1}| = n + 1$ (recall $i$ cannot express ``$i$'' as an opinion). The number of opinions in which $i$ is delegating her vote is $n - 1$. So, the probability that a random opinion of $i$ about $p$ is an opinion delegating $i$'s vote is $\frac{n-1}{n+1}$. Hence the probability that a random profile consists only of delegated votes (no gurus) is $(\frac{n-1}{n+1})^n$.
%Notice first of all that, for a given %$|\N|=n$, the probability that an agent %is not a guru is $\frac{n-1}{n}$, and %the probabiity that none of the agents %is a guru $(\frac{n-1}{n})^n$.
%OLD VERSION: the probability that a %delegation graph over $\N$ has no gurus %is \frac{n-1}{n}. 
The claimed value is then established through this series of equations:
\begin{align*}
\lim_{n \to \infty} \left(\frac{n-1}{n}\right)^n & = \lim_{n \to \infty} \left(\frac{n}{n+2}\right)^n \\
& = \lim_{n \to \infty} \left(\frac{1}{\frac{n+2}{n}}\right)^n \\
& = \lim_{n \to \infty} \left(\frac{1}{1 + \frac{2}{n}}\right)^n \\
& = \lim_{n \to \infty} \left(\frac{1}{(1 + \frac{2}{n})^n}\right) \\
& = \frac{1}{\lim_{n \to \infty}(1 + \frac{2}{n})^n}\\
& = \frac{1}{e^2}
\end{align*}
This completes the proof.
\end{proof}
Now contrast the above simple fact with the probability that all agents abstain on an issue when each voter either expresses a $\1$ or $\0$ opinion or abstains (that is, the binary aggregation with abstentions setting studied earlier).
In that case the probability that everybody abstains tends to $0$ as $n$ tends to infinity.
 
Fact \ref{fact:cycles} should obviously not be taken as a realistic estimate of the effect of cycles on collective abstention, as the impartial culture assumption is a highly idealized assumption.
Election data should ideally be used to assess whether delegation cycles ever lead large parts of the electorate to 'lose their vote', possibly together with refinements of the above argument that take into consideration realistic distributions on proxy profiles, and therefore realistic delegation structures.
Nonetheless, Fact \ref{fact:cycles} does flag a potential problem of cyclical delegations as sources of abstention which has, to the best of our knowledge, never been discussed. The mainstream position on cyclical delegations \cite[Section 2.4.1]{liquid_feedback} is:\footnote{Cf. also \cite{Behrens15}.} 
\begin{quote}
``The by far most discussed issue is the so-called circular delegation
problem. What happens if the transitive delegations lead to
a cycle, e.g. Alice delegates to Bob, Bob delegates to Chris, and
Chris delegates to Alice? Would this lead to an infinite voting
weight? Do we need to take special measures to prohibit such a
situation? In fact, this is a nonexistent problem: A cycle only exists as long as there is no activity in the cycle in which case the cycle has no effect. As already explained [\ldots], as soon as somebody casts a vote, their (outgoing) delegation will be suspended. Therefore, the cycle naturally disappears before it is used. In our example: If Alice and Chris decide to vote, then Alice will no longer delegate to Bob, and Chris will no longer delegate to Alice [\ldots]. If only Alice decides to vote, then only Alice's delegation to Bob is suspended and Alice would use a voting weight of 3. In either case the cycle is automatically resolved and the total voting weight used is 3.''
\end{quote}

We will discuss later (Section \ref{sec:diffusion}) a possible approach to mitigate this issue by suggesting a different interpretation of liquid democracy in terms of influence rather than delegation.

\subsubsection{Individual \& Collective Rationality} \label{sec:indirat}

In our discussion so far we have glossed over the issue of logically interdependent issues and collective rationality. The reason is that under the delegative interpretation of liquid democracy developed in this section individual rationality itself appears to be a more debatable requirement than it normally is in classical aggregation. 

A proxy opinion $O_i$ is {\em individually rational} if the set of formulas
\begin{align}
\set{\gamma} \cup \set{p \in \I \mid O_{g_p(i)}(p) = \1} \cup \set{\neg p \in \I \mid O_{g_p(i)}(p) = \0} \label{eq:ir}
%\set{\gamma} \cup \set{p \in \I \mid O_i(p) = \1 \mbox{   or   } O_{g_p(i)}(p) = \1} \cup %\set{\neg p \in \I \mid O_i(p) = \0 \mbox{   or   } O_{g_p(i)}(p) = \0} \label{eq:ir}
\end{align}
is satisfiable (consistency), and if whenever \eqref{eq:ir} entails $\pm p$, then $\pm p$ belongs to it (closedness).\footnote{Cf. the definition of individual opinions in Section \ref{sec:preliminaries}.} That is, the integrity constraint $\gamma$ is consistent with $i$'s opinion on the issues she does not delegate on, and the opinions of her gurus (if they exist), and those opinions, taken together, are closed under logical consequence (w.r.t. the available issues). 

The constraint in \eqref{eq:ir} captures, one might say, an idealized way of how delegation works: voters are assumed to be able to check or monitor how their gurus are voting, and always modify their delegations if an inconsistency arises. The constraint remains, however, rather counterintuitive under a delegative interpretation of proxy voting. Aggregation via delegable proxy has at least the potential to represent individual opinions as irrational (inconsistent and/or not logically closed). 
%---that is why, intuitively, she delegates her vote in the first place.

\medskip

Like in the case of delegation cycles we will claim that the interpretation of liquid democracy in terms of influence to be developed in Section \ref{sec:diffusion}, rather than in terms of delegation, makes individual rationality at least as defensible as in the classical case.

%%%%%%%%%%%%%%%%%%%%%%%%%%%%%%%%%%%%%%%%%%%%%%%%%%%%%%%%%%%%%%%%%%%%%

\subsection{Embedding in Binary Aggregation with Abstentions}

\subsubsection{One man---One vote}

Aggregation in liquid democracy as conceived in \cite{liquid_feedback} should satisfy the principle that the opinion of every voter, whether expressed directly or through proxy, should be given the same weight:
\begin{quote}
``[\ldots] in fact every eligible voter has still exactly one vote [\ldots] unrestricted transitive delegations are an integral part of Liquid Democracy. [\ldots] Unrestricted transitive delegations are treating delegating voters and direct voters equally, which is most democratic and empowers those who could not organize themselves otherwise'' \cite[p.34-36]{liquid_feedback}
\end{quote}
In other words, this principle suggests that aggregation via delegable proxy should actually be `blind' for the specific type of delegation graph. Making this more formal, we can think of the above principle as suggesting that the only relevant content of a proxy profile is its translation into a standard opinion profile (with abstentions) via a function $t: \mathcal{P} \to \AB$ defined as follows: for any $i \in \N$ and $p \in \I$: $t(O_i(p)) = O_{g_p(i)}$ if $ g_p(i) \neq \emptyset$ (i.e., if $i$ has a guru for $p$), and $t(O_i(p)) = \ast$ otherwise. Clearly, if we assume proxy profiles to be individually rational, the translation will map proxy opinions into individually rational (consistent and closed) incomplete opinions. By extension, we will denote by $t(\O)$ the incomplete opinion profile resulting from translating the individual opinions of a proxy profile $\O$.

\medskip

The above discussion suggests the definition of the following property of proxy aggregators: a proxy aggregator $\pv$ has the {\bf one man--one vote property} (or is a one man---one vote aggregator) if and only if $pv = t \circ F$ for some aggregator $F: \AB_c \to \AB$ (assuming the individual rationality of proxy profiles).\footnote{It should be clear that not every proxy aggregator satisfies this property. By means of example, consider an aggregator that uses the following notion of weight accrued by gurus in a delegation graph. The weight $w(i)$ of $i$ is $\sum_{j \in R^*(i)} \frac{1}{\ell(i,j)}$ where $\ell(i,j)$ denotes the length of the delegation path linking $j$ to $i$. This definition of weight is such that the contribution of voters decreases as their distance from the guru increases. Aggregators of this type are studied in \cite{Boldi_2011}. \label{footnote:weight}}

The class of one man---one vote aggregators can therefore be studied simply as the concatenation $t \circ F$ where $F$ is an aggregator for binary voting with abstentions,
as depicted in the following diagram:
\begin{center}
\begin{tikzpicture}
\node(P) at (0,0) {$\O$};
\node(O) at (2,0) {$t(\O)$};
%\node(Q) at (4,0) {$s(t(\O))$};
\node(F) at (2, -2) {$F(t(\O))$};
\draw[->] (P) -- node[below]{$t$}     (O); 
\draw[->] (P) -- node[left]{$\pv_F$}  (F);
%\draw[->] (Q) -- node[right]{$\pv_F$}  (F);
\draw[->] (O) -- node[right]{$F$}  (F);
%\draw[->] (O) -- node[below]{$s$}  (Q);
%\draw[->] (Q) -- node[above]{$t$}  (O);
\end{tikzpicture}
\end{center}
which gives us a handle to study a large class of proxy voting rules 

\begin{example}
Proxy majority $\pv_\maj$ \eqref{eq:proxymajast} is clearly a one man---one vote rule aggregator. It is easy to check that, for any proxy profile $\O$: $\pv_{\maj}(\O) = \maj(t(\O))$. The same holds for proxy dictatorship \eqref{eq:proxyd}. It is easy to see that proxy dictatorship $\pv_d$ is such that for any proxy profile $\O$: $\pv_{d}(\O) = d(t(\O))$, where $d$ is the standard dictatorship (of $d \in \N$).
\end{example}

It follows that for every proxy aggregator $\pv_F = t \circ F$ the axiomatic machinery developed for standard aggregators can be directly tapped into. 
%A proxy aggregator $\pv_F$ can be said to have a given axiomatic property iff $F$ has %such property. 
Characterization results then extend effortlessly to proxy voting, again providing a strong rationale for the use of majority in proxy aggregation: 
\begin{fact}[Characterization of proxy majority]\label{thm:proxy.quotarules}
A one man---one vote proxy aggregator $\pv = t \circ F$ for a given $\S$ is proxy majority $pv_\maj$ iff $F$ is anonymous, independent, monotonic, responsive, neutral and minimizes undecisiveness.
\end{fact}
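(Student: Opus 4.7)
The plan is to derive Fact~\ref{thm:proxy.quotarules} as a direct consequence of Theorem~\ref{thm:quotarules}, exploiting the factorization $\pv = F \circ t$ that defines one man--one vote proxy aggregators. The key identity to record first is the one already spelled out in the example immediately preceding the statement: $\pv_{\maj} = \maj \circ t$. With this in hand, matching $\pv$ with $\pv_{\maj}$ reduces to matching $F$ with $\maj$ in the second factor of the factorization.

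For the right-to-left direction, assume $F$ is anonymous, independent, monotonic, responsive, neutral and minimizes undecisiveness. Claims 1 and 2 of Theorem~\ref{thm:quotarules} immediately give that $F$ is a uniform quota rule, with quotas $q_\1, q_\0$ satisfying $q_\1 + q_\0 > 1$. The delicate step is to squeeze symmetry $q_\1 = q_\0$ out of minimality of undecisiveness: if, say, $q_\1 > q_\0$, I would construct a companion uniform quota rule with quotas $q_\1' = q_\0$ and $q_\0' = q_\0$ (still satisfying the constraint $q_\1' + q_\0' > 1$), which abstains on a strictly smaller set of profiles, contradicting minimality. Once symmetry is secured, Lemma~\ref{lemma:min} pins the common quota inside $(\tfrac{1}{2}, \tfrac{|\N|+1}{2|\N|}]$, identifying $F$ with $\maj$. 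Hence $\pv = F \circ t = \maj \circ t = \pv_{\maj}$.

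For the left-to-right direction, assume $\pv = \pv_{\maj}$, i.e.\ $F(t(\O)) = \maj(t(\O))$ for every $\O \in \mathcal{P}^\N$. The main observation here is that $t$ is surjective onto $(\AB_c)^\N$: every non-abstaining coordinate of a consistent, closed incomplete opinion can be realized by a direct vote, and every $\ast$ coordinate can be realized by a short delegation cycle between two agents (available whenever $|\N| \geq 2$). Surjectivity forces $F = \maj$ pointwise, and then the easy forward direction of Claim 4 of Theorem~\ref{thm:quotarules} (together with Claims 1 and 2) yields anonymity, independence, monotonicity, responsiveness, neutrality, and minimization of undecisiveness for $F$. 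The only nontrivial step in the whole argument is the extraction of symmetry in the right-to-left direction, since the hypotheses list neutrality (which yields uniformity via Claim 2) but not unbiasedness (which would yield symmetry directly via Claim 3); the quota-comparison argument sketched above is what bridges that gap.
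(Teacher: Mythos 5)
Your overall route is the same as the paper's: the paper proves this fact in one line, as a direct corollary of the definition of $t$ and Theorem~\ref{thm:quotarules}, exactly via the factorization $\pv = F \circ t$ that you use. The added value you aim for is in the two patches you supply, and both have genuine holes. First, the symmetry-extraction step. Your companion rule with $q_\1' = q_\0' = q_\0$ need not satisfy \eqref{eq:constraint}: the constraint only gives $q_\1 + q_\0 > 1$, so $q_\0 \leq \tfrac{1}{2}$ is perfectly possible (e.g.\ $q_\1 = 0.9$, $q_\0 = 0.2$), and then $2q_\0 > 1$ fails and no contradiction arises. Worse, the claim you are trying to prove --- that among uniform quota rules minimal undecisiveness forces symmetry, hence $\maj$ --- is not available under Definition~\ref{def:quota} as it stands: a uniform but asymmetric rule such as $q_\1 = 0.45$, $q_\0 = 0.6$ is admissible and, for small $|\N|$, abstains on no profile with at least one non-abstainer (it decides ties in favour of acceptance), hence has strictly lower undecisiveness than $\maj$, which abstains on every tie. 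So ``neutral $+$ minimizes undecisiveness'' does not yield symmetry; the paper's Lemma~\ref{lemma:min} and Claim~4 of Theorem~\ref{thm:quotarules} only compare $\maj$ against \emph{uniform symmetric} quota rules, and the paper's one-line proof implicitly reads the hypotheses of the fact that way (i.e.\ with unbiasedness/symmetry supplied by Claim~3) rather than deriving symmetry from minimality.

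Second, the left-to-right direction. The map $t$ is \emph{not} surjective onto $(\AB_c)^\N$: since self-delegation is not expressible, an agent who is the \emph{only} abstainer on an issue would have to delegate to an agent holding a definite opinion, and $t$ would then assign her that guru's opinion, not $\ast$. So incomplete profiles in which exactly one agent abstains on some issue lie outside the image of $t$ --- this is precisely the parity problem the paper flags in its footnote on the reverse translation $s$ (``either nobody abstains or at least two individuals abstain'', fixed there by a dummy voter). Consequently $\pv = \pv_\maj$ only pins $F$ down on the image of $t$, and you cannot conclude ``$F = \maj$ pointwise'' nor read off the listed properties of $F$ on all of $(\AB_c)^\N$ without an additional convention (restricting the claim to the image of $t$, or importing the dummy-voter device). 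Your instinct that the fact needs more than the paper's one-line justification is sound, but as written neither of your two bridging arguments closes the gap.
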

\begin{proof}
This follows from the definition of $t$ and Theorem \ref{thm:quotarules}. 
\end{proof}
It follows that on simple agendas and assuming the individual rationality of proxy profiles, proxy majority is the only rational aggregator which is anonymous, independent, monotonic, responsive, neutral and minimizes undecisiveness.

\subsubsection{Impossibility}

Similarly, there are many ways in which pursue the opposite embedding, from standard aggregation into proxy voting. For example, we can define a function $s: \mathcal{O}_c \to \mathcal{P}_c$ from opinion profiles to individually rational proxy profiles as follows. For a given opinion profile $\O$, and issue $p \in \I$ consider the set $\set{i\in \N \mid O_i(p) = \ast}$ of individuals that abstain in $\O$ and take an enumeration $1, \ldots, m$ of its elements, where $m = |\set{i\in \N \mid O_i(p) = \ast}|$. The function is defined as follows, for any $i \in \N$ and $p \in \I$: $s(O_i(p)) = O_i(p)$ if $O_i(p) \in \set{\0,\1}$, $s(O_i(p)) = i+1 \mod m$, otherwise.\footnote{Notice that since self-delegation (that is, $O_i(p) = i$) is not feasible in proxy opinions, this definition of $s$ works for profiles where, on each issue, either nobody abstains or at least two individuals abstain. A dummy voter can be introduced for that purpose.} 
A translation of this type allows to think of standard aggregators $F: \AB_c \to \AB$ as the concatenation $s \circ \pv$, for some proxy aggregator $\pv$:
\begin{center}
\begin{tikzpicture}
%\node(P) at (0,0) {$\O$};
\node(O) at (2,0) {$\O$};
\node(Q) at (4,0) {$s(\O)$};
\node(F) at (2, -2) {$\pv(s(\O))$};
%\draw[->] (P) -- node[below]{$t$}     (O); 
%\draw[->] (P) -- node[left]{$\pv_F$}  (F);
\draw[->] (Q) -- node[right]{$\pv$}  (F);
\draw[->] (O) -- node[right]{$F_\pv$}  (F);
\draw[->] (O) -- node[below]{$s$}  (Q);
%\draw[->] (Q) -- node[above]{$t$}  (O);
\end{tikzpicture}
\end{center}

\begin{fact} \label{thm:imp.proxy}
Let $\S$ be such that its agenda is path connected and evenly negatable. For any proxy aggregator $\pv$, if $s \circ \pv$ is independent, unanimous and collectively rational, then it is oligarchic.
\end{fact}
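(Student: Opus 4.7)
The plan is to reduce the statement directly to Theorem \ref{thm:imp.agg.abst} via the composition construction sketched by the diagram that immediately precedes the fact. First I would clarify the typing (the statement writes $s \circ \pv$, but the diagram makes clear that the intended concatenation is $F_\pv = \pv \circ s$): since $s$ sends a consistent and closed incomplete opinion profile in $(\AB_c)^\N$ to an individually rational proxy profile in $\mathcal{P}_c^\N$, and $\pv$ sends any proxy profile to a collective incomplete opinion in $\AB$, the composition $F_\pv$ is a legitimate aggregator of type $(\AB_c)^\N \to \AB$, precisely the kind of object to which Theorem \ref{thm:imp.agg.abst} speaks.

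Second, I would note that the construction of $s$ is well-defined on every profile in the intended domain: for every issue $p$ either no one abstains (in which case $s$ acts as the identity on that coordinate) or, using the dummy-voter convention from the footnote, the set of abstainers has size at least two, so the cyclic reassignment $s(O_i(p)) = i+1 \mod m$ is a bona fide proxy opinion (no self-delegation). In particular, $s(\O)$ is always in $\mathcal{P}_c^\N$, hence $F_\pv$ is defined on all of $(\AB_c)^\N$.

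Third, the hypotheses of the fact give exactly what Theorem \ref{thm:imp.agg.abst} requires: the agenda is path-connected and evenly negatable, and $F_\pv$ is independent, unanimous and collectively rational. Applying the theorem yields that $F_\pv$ is oligarchic, which is the desired conclusion.

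The argument is essentially a type-checking exercise plus an appeal to the known impossibility theorem, so there is no substantive obstacle. The only mildly delicate point is to make precise that $s$ preserves individual rationality, so that the composition with $\pv$ respects the domain conventions used in Theorem \ref{thm:imp.agg.abst}; this is immediate from the definition of $s$, which changes only the abstention entries and inserts pure delegations, leaving the set of explicitly expressed literals (hence consistency and closedness) untouched.
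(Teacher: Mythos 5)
Your proposal is correct and follows essentially the same route as the paper, which proves the fact in one line by appealing to the definition of $s$ and Theorem \ref{thm:imp.agg.abst}; your version simply spells out the type-checking, the dummy-voter caveat for the well-definedness of $s$, and the preservation of individual rationality, all of which the paper leaves implicit or relegates to a footnote. No gap.
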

\begin{proof}
It follows directly from the definition of $s$ and Theorem \ref{thm:imp.agg.abst}.
\end{proof}

%%%%%%%%%%%%%%%%%%%%%%%%%%%%%%%%%

\subsection{Section Summary}

The section has provided a very simple model of delegable proxy voting within the framework of binary aggregation. This has allowed us to put liquid democracy in perspective with an established body of results in the social choice theory tradition, and highlight two of its problematic aspects, which have so far gone unnoticed: the effect of cycles on collective indecisiveness, and the issue of preservation of individual rationality under delegable proxies.

An independent, purpose-built axiomatic analysis for liquid democracy focused on its more characteristic features (like the one man---one vote property) is a natural line of research, which we do not pursue here.

%%%%%%%%%%%%%%%%%%%%%%%%%% deGroot

\section{Liquid Democracy as Binary Opinion Diffusion} \label{sec:diffusion}

Proxy voting can also be studied from a different perspective. Imagine a group where, for each issue $p$, each agent copies the $\0,\1$ opinion of a unique personal ``guru''. Imagine that this group does so repeatedly until all agents (possibly) reach a stable opinion. These new stable opinions can then be aggregated as the `true' opinions of the individuals in the group. The collective opinion of a group of agents who either express a $\0,\1$ opinion or delegate to another agent is (for one man---one vote proxy aggregators) the same as the output obtained from a vote where each individual has to express a $\0,\1$ opinion but gets there by copying the opinion of some unique ``guru'' (possibly themselves). 
In this perspective, a proxy voting aggregation can be assimilated to a (converging) process of opinion formation. 

The above interpretation of liquid democracy is explicitly put forth in \cite{liquid_feedback}:
\begin{quote}
``While one way to describe delegations is the transfer of voting weight to another person, you can alternatively think of delegations as automated copying of the ballot of a trustee.
While at assemblies with voting by a show of hands it is naturally possible to copy the vote of other people, in Liquid Democracy this becomes an intended principle'' \cite[p. 22]{liquid_feedback}.
\end{quote}
The current section develops an analysis of this interpretation, and highlights some of its advantages over the delegation-based one studied earlier.

%%%%%%%%%%%%%%%%%%%%%%%%%%%%%%%%%%%%%%%%

\subsection{Binary aggregation and binary influence}

The section develops a very simple model of binary influence based on the standard framework of binary aggregation (see Appendix \ref{appendix:graph} for a concise presentation). For simplicity, in this section we assume agents are therefore not allowed to abstain, although this is not a crucial assumption for the development of our analysis.

%%%%%%%%%%%%%%%%%%%%%%%%%%%%%%%%%%%%%%%%%%%%%

\subsubsection{DeGroot Processes and Opinion Diffusion}

In \cite{Degroot_1974}, DeGroot proposes a simple model of step-by-step opinion change under social influence. The model combines two types of matrices. Assuming a group of $n$ agents, a first $n\times n$ matrix represents the weighted influence network (who influences whom and how much), and a second $n \times m$ matrix represents the probability assigned by each agent to each of the $m$ different alternatives. Both the agents' opinion and the influence weights are taken within $[0,1]$ and are (row) stochastic (each row sums up to $1$). Given an opinion and an influence matrix, the opinion of each agent in the next time step is obtained through linear averaging.

Here we focus on a specific class of opinion diffusion processes in which opinions are binary, and agents are influenced by exactly one influencer, possibly themselves, of which they copy the opinion. 
The model captures a class of processes which lies at the interface of two classes of diffusion models that have remained so far unrelated: the stochastic opinion diffusion model known as DeGroot's \cite{Degroot_1974}, and the more recent propositional opinion diffusion model due to \cite{Grandi:2015:POD:2772879.2773278}. The diffusion processes underpinning liquid democracy---which we call here Boolean DeGroot processes (BDPs)---are the $\set{0,1}$ special case of the DeGroot stochastic processes and, at the same time, the special case of propositional opinion diffusion processes where each agent has access to the opinion of exactly one neighbor (cf. Figure \ref{figure:intersection}).

\begin{figure}[t]
\begin{center}
\includegraphics[width=0.7\columnwidth]{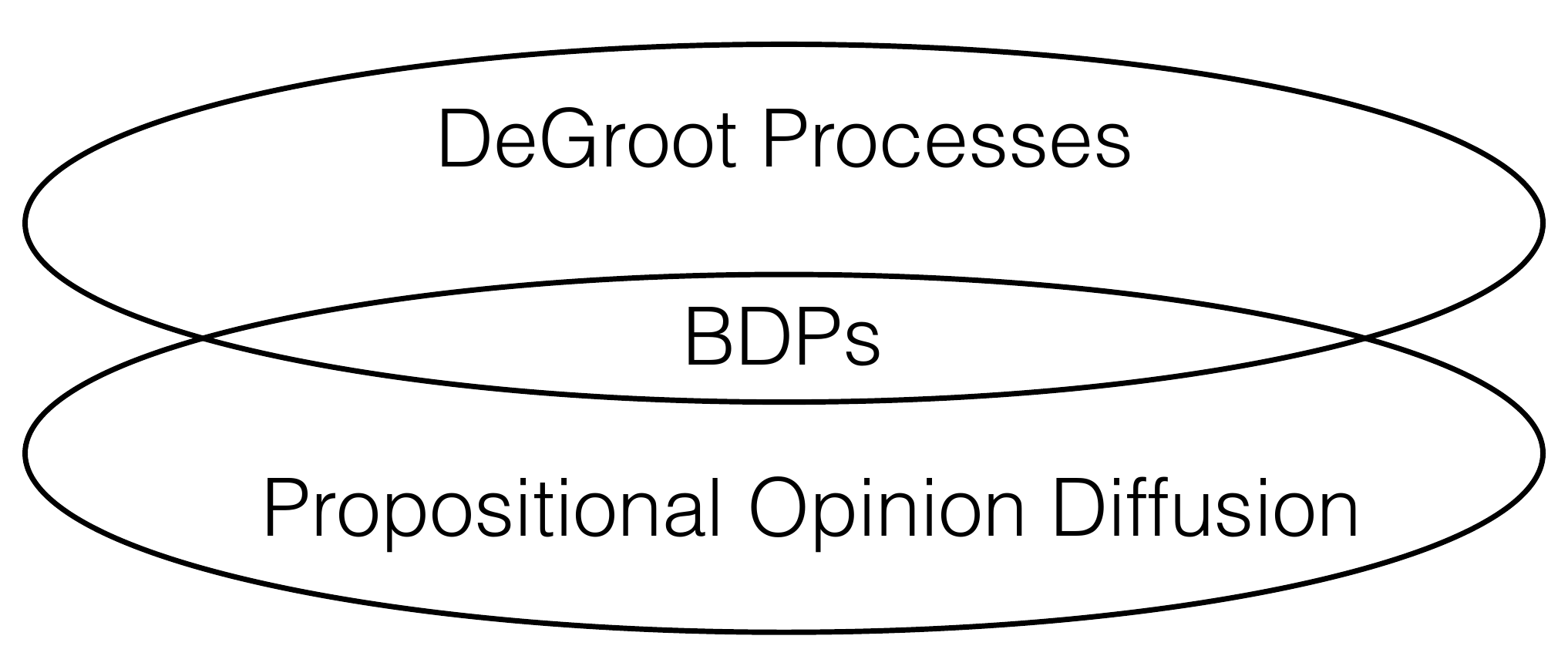}
\caption{{{BDPs lie in the intersection of DeGroot processes and propositional opinion diffusion processes.\label{figure:intersection}%
}%
}}
\end{center}
\end{figure}

\subsubsection{Boolean DeGroot processes}

Here we focus on the Boolean special case of a DeGroot process showing its relevance for the analysis of liquid democracy. Opinions are defined over a BA structure, and hence are taken to be binary. Similarly, we take influence to be modeled by the binary case of an influence matrix. Influence is of an ``all-or-nothing'' type and each agent is therefore taken to be influenced by exactly one agent, possibly itself. The graph induced by such a binary influence matrix (called \emph{influence graph}) is therefore a structure $G = \tuple{\N, R}$ where $R \subseteq \N^2$ is a binary relation where $i R j$ is taken here to denote that ``$i$ is influenced by $j$''. Such relation is serial ($\forall i\in \N, \exists j \in \N: i R j$) and functional ($\forall i,j,k \in \N$ if $i R j$ and $i R k$ then $j = k$). So each agent $i$ has exactly one successor (the influencer), possibly itself, which we denote $R(i)$. It should be clear that influence graphs are the same sort of structures we studied earlier in Section \ref{sec:proxy} under the label 'delegation graph'.

An \emph{influence profile} $\G=(G_1,\dots,G_m)$ records how each agent is influenced by each other agent, with respect to each issue $p \in \Atoms$. Given a profile $\G$ the i$^{\mathit{th}}$ projection $G_i$ denotes the influence graph for issue $p_i$, also written $G_p$.

\medskip

So let us define the type of opinion dynamics driving BDPs:
\begin{definition}
[BDP] \label{def:BDP}
Now fix an opinion profile $\O$ and an influence profile $\G$. Consider the stream $\O^0, \O^1, \ldots, \O^n, \ldots$ of opinion profiles recursively defined as follows:
\begin{itemize}
\item Base: $\O_0 := \O$
\item Step: for all $i \in \N$, $p\in \Atoms$, $O_i^{n+1}(p) := O^{n}_{R_p(i)}(p)$.
\end{itemize}
where $G_p = \tuple{\N, R_p}$.
We call processes defined by the above dynamics \emph{Boolean DeGroot processes} (BDPs).
\end{definition}
It should be clear that the above dynamics is the extreme case of linear averaging applied on binary opinions and binary influence.

As noted above, BDPs are also the special case of processes that have recently been proposed in the multi-agent systems literature as \emph{propositional opinion diffusion} processes \cite{Grandi:2015:POD:2772879.2773278}, i.e., cases where 1) the aggregation rule is the unanimity rule (an agent adopts an opinion if and only if all her influencers agree on it), and 2) each agent has exactly one influencer.  We will come back to the link with propositional opinion diffusion in some more detail later in Section \ref{sec:coloring}.

%%%%%%%%%%%%%%%%%%%%%%%%%%%%%%%%%%%%%%%

\subsection{Convergence of BDPs} \label{sec:convergence}

When do the opinions of a group of individuals influencing each other stabilize? Conditions have been given, in the literature, for the general paradigms of which BDPs are limit cases. This section introduces the necessary graph-theoretic notions and briefly recalls those results before giving a characterization of convergence for BDPs.

\subsubsection{Preliminaries}

We start with some terminology.
We say that the stream of opinion profiles $\O^0, \O^1, \ldots, \O^n, \ldots$ {\em converges} if 
there exists $n \in \mathbb{N}$ such that for all $m\in \mathbb{N}$, if $m\geq n$, then $\O^m = \O^n$.

We will also say that a stream of opinion profiles converges {\em for issue} $p$ if 
there exists $n \in \mathbb{N}$ such that, for all $m\in \mathbb{N}$, if $m\geq n$, then $\O^m (p) = \O^n(p)$.
Given a stream of opinion profiles starting at $\O$ we say that agent $i \in \N$ stabilizes in that stream for issue $p$ if there exists $n \in \mathbb{N}$ such that $O^n_i(p) = O^{m}_i(p)$ for any $m > n$. So a BDP on influence graph $\G$ starting with the opinion profile $\O$ is said to converge if the stream $\O^0, \O^1, \ldots, \O^n, \ldots$ generated according to Definition \ref{def:BDP} where $\O = \O^0$ converges. Similarly, A BDP is said to converge for issue $p$ if its stream converges for $p$, and an agent $i$ in the BDP is said to stabilize for $p$ if it stabilizes for $p$ in the stream generated by the BDP.

\medskip

Notice first of all that influence graphs have a special shape:\footnote{Please consult Appendix \ref{appendix:graph} for the relevant terminology from graph theory.}
\begin{fact}
\label{fact:uniquecycle}
Let $G$ be an influence graph and $C$ be a connected component of $G$. 
Then $C$ contains exactly one cycle, and the set of nodes in the cycle is closed. 
\end{fact}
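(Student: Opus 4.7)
The plan is to exploit two structural features of an influence graph: finiteness of $N$ and the fact that $R$ is functional (every node has exactly one out-neighbor). These together will force every connected component to have the classic ``$\rho$-shape'' --- a single cycle with trees hanging off it, all oriented toward the cycle.

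First I would establish existence of a cycle in $C$. Pick any $i \in C$ and consider its forward orbit $i, R(i), R^2(i), \ldots$, which is well-defined since $R$ is functional. By finiteness of $N$, some value must recur; the smallest $k, \ell$ with $R^{k+\ell}(i) = R^k(i)$ identify a directed cycle $\{R^k(i), \ldots, R^{k+\ell-1}(i)\}$ contained in $C$. (Loops $iRi$ count as length-one cycles.)

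For uniqueness, I would assign to each $v \in C$ its \emph{sink cycle} $\sigma(v)$: the unique cycle that the forward orbit of $v$ eventually enters. This is well-defined because once the orbit hits a cycle node, functionality of $R$ forces every subsequent step to stay on that cycle, so no orbit can visit two different cycles. I would then show that $\sigma$ is invariant along each edge of the underlying undirected graph: if $R(u) = v$, then the forward orbit of $u$ is $u$ followed by the forward orbit of $v$, so $\sigma(u) = \sigma(v)$ regardless of whether we traverse the edge $(u,v)$ forward or backward. Hence $\sigma$ is constant on each connected component. If $C_1, C_2 \subseteq C$ were distinct cycles, taking $a \in C_1$ and $b \in C_2$ would give $\sigma(a) = C_1 \neq C_2 = \sigma(b)$, contradicting connectedness of $C$.

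Closedness of the cycle's vertex set under $R$ is then immediate: if $x$ is a node of the cycle, then by definition of a directed cycle $R(x)$ is the next node on the same cycle, hence in the set. The one mildly delicate step is the uniqueness argument, where I have to translate undirected connectivity into directed reasoning; the sink-cycle device handles this cleanly by reducing the claim to checking invariance of $\sigma$ across a single edge, in either orientation.
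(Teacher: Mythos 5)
Your proof is correct, and while it follows the same overall skeleton as the paper's (produce a cycle, argue uniqueness from connectedness, get closedness from functionality), the execution of the two nontrivial steps is genuinely different. For existence, the paper assumes there is no cycle, takes a longest path in the finite component, and derives a contradiction with seriality; your orbit-recurrence argument reaches the same conclusion more directly and additionally gives you, for free, the fact that every node's forward orbit lands in some cycle, which you then reuse. For uniqueness, the paper first proves closedness and then says that closedness of each cycle means no path connects one cycle to another, contradicting connectedness; this is stated in terms of (directed) paths, whereas the connected component is defined via the symmetric--transitive closure $R^*$, so strictly speaking one must also rule out \emph{undirected} paths between two cycles. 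Your sink-cycle map $\sigma$ handles exactly this point: invariance of $\sigma$ across a single edge traversed in either orientation propagates to all of $C$ by induction on undirected paths, so distinct cycles cannot cohabit a component. In that sense your argument is a slightly more careful rendering of the paper's idea, at the cost of introducing the auxiliary map; the paper's version is shorter but leans on the reader to bridge the directed/undirected gap. Your closedness step is essentially identical to the paper's (the unique successor of a cycle node is the next cycle node, so no edge can leave the cycle). One cosmetic remark: well-definedness of the forward orbit uses seriality (existence of a successor) as well as functionality (uniqueness); you fold both into ``exactly one out-neighbor,'' which is fine, but attribute the orbit's existence to seriality if you separate the two properties.
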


\begin{proof}
Assume that $C$ does not contain any cycle. Since $\N$ is finite and since no path can repeat any node, any path in $C$ is finite too. Let $i$ be the last element of (one of) the longest path(s) in $C$. Then $i$ does not have any successor, which contradicts seriality. So $C$ contains at least one cycle. 
Let $S$ be the set of nodes of a cycle in $C$. Assume that $S$ is not closed: for some $i\in S$ and $j\notin S$, $iRj$. Since $S$ is a cycle, there is also some $k\in S$, such that $iRk$, which contradicts functionality. Therefore, the nodes of any cycle in $C$ forms a closed set. 
Now assume that $C$ contains more than one cycle. Since the nodes of each cycle forms a closed set, there is no path connecting any node inside a cycle to any node in any other cycle, which contradicts connectedness. So $C$ contains a unique cycle, whose nodes form a closed set. 
\end{proof}
Intuitively, influence graphs of BDPs then look like sets of confluent chains aiming together towards common cycles.

\subsubsection{Context: convergence in DeGroot processes}

For the general case of DeGroot processes, an influence structure guarantees that any distribution of opinions will converge if and only if ``every set of nodes that is strongly connected and closed is aperiodic" \cite[p.233]{jackson08social}.
In the propositional opinion diffusion setting, sufficient conditions for stabilization have been given by \cite[Th. 2]{Grandi:2015:POD:2772879.2773278}: on influence structures containing cycles of size at most one (i.e, only self-loops), for agents using an aggregation function satisfying (ballot-)monotonicity and unanimity\footnote{Notice that the rule underpinning BDP, that is the `guru-copying' rule on serial and functional graphs, trivially satisfies those constraints.}, opinions will always converge in at most at most $k+1$ steps, where $k$ is the diameter of the graph.\footnote{A second sufficient condition for convergence is given by \cite{Grandi:2015:POD:2772879.2773278}: when agents use the unanimity aggregation rule, on irreflexive graphs with only vertex-disjoint cycles, such that for each cycle there exists an agent who has at least two influencers, opinions converge after at most $\N$ steps. Note that no BDP satisfies this second condition.} The results below show how BDPs are an interesting limit case of both DeGroot and propositional opinion diffusion processes.

\subsubsection{Two results}

It must be intuitively clear that non-convergence in a BDP is linked to the existence of cycles in the influence graphs. However, from the above observation 
(Fact~\ref{fact:uniquecycle}), we know that nodes in a cycle cannot have any influencers outside this cycle, and hence that cycles (including self-loops) can only occur at the ``tail'' of the influence graph.
Hence, if the opinions in the (unique) cycle do not converge, which can only happen in a cycle of length $\geq 2$, the opinions of the whole population in the same connected component will not converge. The above implies that for any influence graphs with a cycle of length $\geq 2$, there exists a distribution of opinions which loops. 
This brings us back to convergence result for general (not necessarily Boolean) DeGroot processes. Indeed, for functional and serial influence graphs, a closed connected component is aperiodic if and only if its cycle is of length $1$.

\begin{fact}
\label{fact:influence}
Let $\G$ be an influence profile. Then the following are equivalent:
\begin{enumerate}
\item The BDP converges for any opinion profile $\O$ on $\G$. 
\item For all $p\in \Atoms$, $G_{p}$ contains no cycle of length $\geq 2$.
\item For all $p\in \Atoms $, all closed connected components of $G_{p}$ are aperiodic.
\end{enumerate}
\end{fact}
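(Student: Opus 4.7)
The plan is to prove the three-way equivalence by establishing $(2) \Leftrightarrow (3)$ directly from Fact~\ref{fact:uniquecycle} and then handling the substantive equivalence $(1) \Leftrightarrow (2)$ via two separate directions. For $(2) \Leftrightarrow (3)$: in a serial and functional graph, each connected component contains a unique cycle whose nodes form a closed set, so the closed strongly connected subgraphs of $G_p$ are exactly these cycles, and the period of such a subgraph equals its cycle length. Absence of cycles of length $\geq 2$ is therefore equivalent to all closed components having period $1$.

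For $(2) \Rightarrow (1)$, I would fix an issue $p$ and a connected component $C$ of $G_p$. By Fact~\ref{fact:uniquecycle} and hypothesis (2), $C$ contains a unique cycle, which must be a self-loop at some node $i^\ast$. Since $R_p(i^\ast) = i^\ast$, we have $O^{n}_{i^\ast}(p) = O^{0}_{i^\ast}(p)$ for every $n$, so $i^\ast$ is immediately stable. Every other node of $C$ has, by functionality and seriality, a unique outgoing path, which is finite and must terminate in the unique cycle of $C$, i.e., at $i^\ast$. I would then prove by induction on the distance $d(j,i^\ast)$ inside $C$ that $O^{n}_j(p) = O^{0}_{i^\ast}(p)$ for all $n \geq d(j,i^\ast)$: the inductive step uses that $R_p(j)$ is at distance $d(j,i^\ast) - 1$ from $i^\ast$ and is therefore stable by step $d(j,i^\ast) - 1$ by the induction hypothesis, so $j$ inherits that value at the next step via Definition~\ref{def:BDP}. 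Since $C$ is finite, the whole component stabilizes in finitely many steps. Applying this to every component and every issue yields convergence of the BDP from every initial profile.

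For $(1) \Rightarrow (2)$, I would argue by contraposition. Suppose $G_p$ contains a cycle $i_1 R_p i_2, \ldots, i_{k-1} R_p i_k, i_k R_p i_1$ of length $k \geq 2$. Define an initial opinion profile $\O^0$ by setting $O^0_{i_1}(p) = \1$ and $O^0_{i_j}(p) = \0$ for $j \in \set{2, \ldots, k}$, with arbitrary values elsewhere. By Definition~\ref{def:BDP}, the opinions on the cycle undergo a cyclic shift: $O^{n+1}_{i_j}(p) = O^{n}_{i_{j+1}}(p)$ with indices taken modulo $k$. Since the initial assignment is non-constant on the cycle and the shift has order exactly $k \geq 2$, the position of the unique $\1$ moves at every step; in particular $\O^{n+1} \neq \O^{n}$ for every $n$, so the BDP does not converge, refuting (1).

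The argument is largely mechanical once Fact~\ref{fact:uniquecycle} is in hand. The only step requiring a little care is the well-definedness of the ``distance to the self-loop'' function in the $(2) \Rightarrow (1)$ direction: this rests on functionality and seriality ensuring that every node has a unique forward trajectory which, by finiteness of $\N$ together with the unique-cycle property, must enter the component's self-loop. This is the main, and quite mild, obstacle; the $(1) \Rightarrow (2)$ direction is a direct construction, and $(2) \Leftrightarrow (3)$ is essentially a restatement.
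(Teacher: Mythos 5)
Your proposal is correct and follows essentially the same route as the paper: both rest on Fact~\ref{fact:uniquecycle}, prove convergence under condition (2) by letting the value at the unique self-loop propagate along the (finite, functional) influence paths, and refute convergence in the presence of a cycle of length $\geq 2$ by placing disagreeing opinions on that cycle so that they shift forever, with the equivalence to aperiodicity being immediate since each component's unique cycle determines its period. The only difference is cosmetic: you prove $(2)\Leftrightarrow(3)$ and $(1)\Leftrightarrow(2)$ separately, whereas the paper closes the triangle $2\Rightarrow 1\Rightarrow 3\Rightarrow 2$.
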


\begin{proof}
\fbox{$2) \Rightarrow 1)$}
Let $p\in\Atoms$ and assume that $G_p$ contains no cycle of length $\geq 2$ and has diameter $k$. Let $C_p$ be a connected component of $G_p$. By Fact \ref{fact:uniquecycle}, $C_p$ contains a unique cycle, which, by assumption, is of length $1$. Hence, $C_p$ is aperiodic. Let $i$ be the node in the cycle. The opinion of $i$ will spread to all nodes in $C_p$ after at most $k$ steps. Therefore, all BDPs on $G$ will converge after at most $l$ steps, where $l$ is the maximum within the set of diameters of $G_p$ for all $p\in\Atoms$. 
\fbox{$1) \Rightarrow 3)$} We proceed by contraposition. 
Assume that for some $p\in\Atoms$, a connected component $C_p$ of $G_p$ contains a cycle of length $k\geq 2$. By \ref{fact:uniquecycle}, this cycle is unique, and therefore the greatest common divisor of the cycles lengths of $C_p$ is $k$, so $C_p$ is not aperiodic. 
Let $S$ be the set of nodes in the cycle. 
Let $\O$ be such that for some $i,j\in S$ with distance $d$ from $i$ to $j$, $O_i(p)\neq O_j(p)$. Then $\O_i(p)$ will not converge, but enter a loop of size $k$: for all $x\in\mathbb{N}$, $O^{x\times k}_i(p) \neq O^{(x\times k)+d}_i(p)$. Hence, $\O$ does not converge. 
\fbox{$3) \Rightarrow 2)$} Trivial.
\end{proof}
It is worth noticing that one direction (namely from $3$ to $1$) of the above result is actually a corollary of both the convergence result for DeGroot processes stated at the beginning of this section (cf. \cite{jackson08social}), and of a known convergence result for propositional opinion diffusion \cite[Th. 2]{Grandi:2015:POD:2772879.2773278}, also stated earlier.

\medskip

The above gives a characterization of the class of influence profiles on which \emph{all} opinion streams converge. But we can aim at a more general result, characterizing the class of pairs of opinion and influence profiles which lead to convergence:

\begin{theorem}
\label{theorem:opinion}
Let $\G$ be an influence profile and $\O$ be an opinion profile. Then the following statements are equivalent:
\begin{enumerate}
\item The BDP converges for $\O$ on $\G$.
\item For all $p\in \Atoms$, there is no set of agents $S\subseteq\N$ such that: $S$ is a cycle in $G_{p}$ and there are two agents $i,j\in S$ such that $O_i(p)\neq O_j(p)$.
\end{enumerate}
\end{theorem}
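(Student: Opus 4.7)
The plan is to fix an arbitrary issue $p$ and exploit the fact that the BDP dynamics on each issue is independent of the others (since $O_i^{n+1}(p)$ depends only on $p$-information via $R_p$). Convergence of the full BDP on $\G$ for $\O$ then reduces to convergence on each $G_p$ separately, and by Fact \ref{fact:uniquecycle}, on each connected component $C$ of $G_p$, which contains a unique cycle $S$ whose node set is closed.

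For the direction $(1) \Rightarrow (2)$ I would argue by contraposition. Suppose there is a cycle $S$ in some $G_p$ of length $k \geq 2$ containing two agents $i,j$ with $O_i(p) \neq O_j(p)$. Because $S$ is closed (Fact \ref{fact:uniquecycle}), the $p$-dynamics restricted to $S$ is self-contained, and since $R_p$ restricted to $S$ is just the cyclic functional successor on $k$ elements, the update $O^{n+1}_{v}(p) := O^n_{R_p(v)}(p)$ implements a cyclic shift of the $k$-tuple of opinions around $S$. A cyclic shift of length $k$ is a bijection on $\{\0,\1\}^k$ whose only fixed points are the two constant tuples; since the initial tuple has both values occurring, the orbit has period $>1$, and hence the stream $\O^0(p), \O^1(p), \ldots$ never stabilizes, contradicting convergence.

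For $(2) \Rightarrow (1)$, assume no cycle in any $G_p$ contains disagreement. Fix $p$ and a component $C$ of $G_p$ with unique cycle $S$. By assumption every node in $S$ holds the same value $x \in \{\0,\1\}$ on $p$; since each $v \in S$ satisfies $R_p(v) \in S$, the update rule gives $O^{n+1}_v(p) = O^n_{R_p(v)}(p) = x$, so the restriction of the profile to $S$ is already a fixed point. For any $i \in C \setminus S$, by Fact \ref{fact:uniquecycle} the unique forward path from $i$ in $C$ must end inside $S$ (otherwise one would obtain a second cycle or violate seriality), so there exists $d_i \in \mathbb{N}$ such that $d_i$-fold iteration of $R_p$ lands in $S$. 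Then $O^{d_i}_i(p) = O^0_{R_p^{d_i}(i)}(p) = x$, and an easy induction shows this value is preserved thereafter. Taking the maximum of $d_i$ over all nodes, components, and issues yields a uniform bound after which the full profile is stationary.

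The main obstacle is the $(1) \Rightarrow (2)$ direction: one must be sure that the cyclic shift on the cycle really prevents stabilization rather than merely creating transient oscillation. The key observation that dispels this worry is that the cycle is closed, so nothing from outside $C \setminus S$ can ever reach $S$ and perturb it; the restriction of the dynamics to $S$ is therefore exactly a deterministic cyclic permutation on $\{\0,\1\}^k$, whose only fixed points are the constant tuples, which is precisely why the hypothesis $O_i(p) \neq O_j(p)$ for some $i,j \in S$ forces non-convergence.
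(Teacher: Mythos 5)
Your proof is correct and follows essentially the same route as the paper's: contraposition for $(1)\Rightarrow(2)$ via the observation that opinions rotate around a closed cycle with period equal to its length (so a non-constant cycle never stabilizes), and for $(2)\Rightarrow(1)$ the fixed value on each unique cycle propagating down the trees after at most the distance to the cycle. Your phrasing of the cycle dynamics as a cyclic permutation of $\{\0,\1\}^k$ whose only fixed points are constant tuples is just a slightly more explicit packaging of the paper's ``loop of size $k$'' argument, so no substantive difference.
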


\begin{proof}
\fbox{$1) \Rightarrow 2)$} We proceed by contraposition.
Let $p\in\Atoms$, $S\subseteq\N$ be a cycle in $G_p$, $i,j\in S$, and $O_i(p)\neq O_j(p)$. Let $k$ be the length of the cycle and $d$ be the distance from $i$ to $j$. Then $O_i(p)$ will enter a loop of size $k$: for all $x\in\mathbb{N}$, $O^{x k}_i(p)\neq O^{x k+d}_i(p)$.
\fbox{$2) \Rightarrow 1)$}
Assume $S\subseteq\N$ be such that $S$ is a cycle in $G_p$, and for all $i,j\in S$, $O_i(p)=O_j(p)$. Then, for all $j\in S$, and all $x\in\mathbb{N}$, $O^{x}_j(p)=O_i(p)$ and for all $f\in\N\notin S$ with distance $d$ from $f$ to $i$, for all $x\in\mathbb{N}$, such that $x\geq d$, $O^{x}_f(p)=O_i(p)$. 
\end{proof}
This trivially implies that the class of opinion profiles which guarantees convergence for {\em any} influence profile, is the one where everybody agrees on everything already. Note that the only stable distributions of opinions are the ones where, in each connected component in $G$, all members have the same opinion, i.e, on BDPs, converging and reaching a consensus (within each connected component) are equivalent, unlike in the stochastic case. Moreover, for an influence profile where influence graphs have at most diameter $d$ and the smallest cycle in components with diameter $d$ is of length $c$, it is easy to see that if a consensus is reached, it will be reached in at most $d-c$ steps, which is at most $n-1$.

\medskip

Finally observe that Theorem \ref{theorem:opinion} subsumes Fact \ref{fact:influence}. If $G_p$ contains only cycles of length $1$ (second statement in Fact \ref{fact:influence}) then, trivially, no two agents in a cycle can disagree (second statement in Theorem \ref{theorem:opinion}).

\subsubsection{Liquid Democracy as a BDP}

We have seen (Section \ref{sec:proxy}) that each proxy profile $\O$ induces what we called a delegation graph $G^\O = \tuple{N, R_p}$ for each issue $p$. Delegation graphs are the same sort of structures we referred to in the current section as influence graphs. So each proxy profile $\O$ can be associated to a BDP by simply assigning random $\0$ or $\1$ opinions to each voter delegating her vote in $\O$. It is then easy to show that for each connected component $C$ of $G^\O$, if $C$ has a guru with opinion $x$, then that component stabilizes in the BDP on opinion $x$ for each assignment of opinions to the delegating agents in $\O$. Vice versa, if $C$ stabilizes on value $x$ in the BDP for each assignment of opinions to the delegating agents in $\O$, then $C$ has a guru whose opinion is $x$. This establishes a direct correspondence between voting with delegable proxy and Boolean deGroot processes. However, BDPs offer an interesting and novel angle on the issue of cyclical delegations, to which we turn now.

\subsubsection{Cycles}

As discussed earlier (Section \ref{sec:proxyabs}), cycles are a much discussed issue in liquid democracy. Its proponents tend to dismiss delegation cycles as a non-issue: since the agents forming a cycle delegate their votes, none of them is casting a ballot and the cycles get resolved essentially by not counting the opinions of the agents involved in the cycle \cite{liquid_feedback}. We stressed this solution as problematic in the `vote-delegation' interpretation of liquid democracy as it has the potential to discard large numbers of opinions.  The elimination of cycles not only hides to aggregation the opinions of the agents involved in cycles, but also the opinions of agents that may be linked to any of those agents by a delegation path. In other words information about entire connected components in the delegation graph may be lost.

We argue that the `vote-copying' interpretation of the system---formalized through BDPs---offers novel insights into possible approaches to cycles in delegable proxy. Theorems \ref{theorem:opinion} and \ref{theorem:mu} offer an alternative solution by showing that not all cycles are necessarily bad news for convergence: cycles in which all agents agree still support convergence of opinions, and therefore a feasible aggregation of opinions by proxy. This suggests that alternative proxy voting mechanisms could be designed based on opinion convergence behavior rather than on weighted voting.

%%%%%%%%%%%%%%%%%%%%%%

\subsection{Excursus: unanimity and 2-colorability}\label{sec:coloring}

In the above, we have worked at the intersection of two models of opinion diffusion, the DeGroot model, and the propositional opinion diffusion model. However, there is more to say about how the two frameworks relate.

Let us take a brief detour towards a generalisation of BDPs corresponding to the case of propositional opinion diffusion with the unanimity rule, where agents can have several influencers and change their opinions only if all their influencers disagree with them. This means that we relax the functionality constraint on influence graphs. We will show how the two frameworks meet again: some non-stabilizing opinion cases under the unanimity rule correspond to a special class among the `semi-Boolean' cases of DeGroot processes where opinions are still binary but influence does not need to be.

We define the dynamics of opinions under the unanimity rule in the obvious way:

\begin{definition}
[UP]
Fix an opinion profile $\O$ and a (serial but non-necessarily functional) influence profile $\G$. Consider the stream $\O^0, \O^1, \ldots, \O^n, \ldots$ of opinion profiles recursively defined as follows:
\begin{itemize}
\item Base: $\O_0 := \O$
%\item Step: for all $i \in \N$, $j\in \{1,...,m\}$, $\O_i^{n+1}(p_j)$ is given by:
%\begin{itemize}
%\item $\O_i^{n+1}(p_j)=\O_i^{n}(p_j)$ if for some $j,k\in R_j(i)$,$\O_j^{n}(p_j)\neq \O_k^{n}(p_j)$, and 
%\item $\O_i^{n+1}(p_j)\neq \O_i^{n}(p_j)$ otherwise. 
%\end{itemize}
%I AM NOT CONVINCED BY THE ABOVE STEP. MAYBE THIS (I ALSO SIMPLIFY NOTATION):
\item Step: for all $i \in \N$ and all $p \in \Atoms$:
%\begin{itemize}
\begin{align}
\O_i^{n+1}(p) & = \left\{
\begin{array}{ll}
O_i^{n}(p) & \mbox{if for some $j,k\in R_p(i)$,$O_j^{n}(p)\neq O_k^{n}(p)$ } \\
O_j^{n}(p) & \mbox{otherwise, where $j \in R_p(i)$ } 
\end{array}
\right.
\end{align}
%\end{itemize}
\end{itemize}
where $G_p = \tuple{\N, R_p}$.
We call processes defined by the above dynamics \emph{Unanimity Processes} (UPs).
\end{definition}

We give a sufficient condition for non-convergence of UPs:

\begin{lemma}
\label{lemma:suff.UP}
Let $G$ be a (serial and non-necessarily functional) influence profile and $\O$ be an opinion profile, such that, for some $p\in \Atoms $, for all $i,j\in C$, where $C$ is a connected component of $G_p$: if $i\in R_p(j)$, then $O_i(p)\neq O_j(p)$. 
Then $\O$ does not converge in UP.
\end{lemma}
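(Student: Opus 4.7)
The plan is to show that under the stated hypothesis, every agent in the connected component $C$ flips its opinion at every step, giving a period-2 loop and thus non-convergence. The key observation is that since opinions are binary, saying that every influencer of $j$ disagrees with $j$ is equivalent to saying that all influencers of $j$ unanimously hold the opposite value to $O_j(p)$. This is exactly the trigger for the unanimity rule to fire.

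First I would unpack the dynamics: fix $p$ and $C$ as in the hypothesis. For each $j\in C$, by seriality $R_p(j)\neq\emptyset$, and (since $C$ is a connected component) $R_p(j)\subseteq C$. The hypothesis then gives $O_i^0(p) = \neg O_j^0(p)$ for every $i\in R_p(j)$, so all elements of $R_p(j)$ agree with each other. Hence the ``otherwise'' clause of the UP update applies, and $O_j^1(p) = O_i^0(p) = \neg O_j^0(p)$ for any chosen $i\in R_p(j)$. Thus every agent in $C$ flips its opinion in one step.

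Second, I would show the hypothesis is preserved. Take any $i,j\in C$ with $i\in R_p(j)$. By the step above, $O_i^1(p)=\neg O_i^0(p)$ and $O_j^1(p)=\neg O_j^0(p)$. Since $O_i^0(p)\neq O_j^0(p)$ by hypothesis, negating both sides still gives $O_i^1(p)\neq O_j^1(p)$. So the opinion profile $\O^1$ (restricted to $C$) again satisfies the hypothesis. By a straightforward induction on $n$, $O_j^{n+1}(p) = \neg O_j^n(p)$ for all $j\in C$ and all $n\in\mathbb{N}$. Picking any $j\in C$, we have $O_j^n(p)\neq O_j^{n+1}(p)$ for every $n$, so the stream never stabilizes, and therefore $\O$ does not converge in UP.

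I do not expect a genuine obstacle: the argument is essentially a verification. The only subtle point is checking that the ``otherwise'' clause of the unanimity update is triggered for every $j\in C$ at every step, which relies crucially on opinions being binary (two disagreeing values collapse to unanimous agreement on the opposite value) and on $C$ being closed under taking influencers so that no influencer lies outside the hypothesis. A brief remark noting that the hypothesis also rules out self-loops in $C$ (since $i\in R_p(i)$ would force $O_i(p)\neq O_i(p)$) is worth including for completeness.
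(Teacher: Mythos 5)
Your proof is correct and takes essentially the same route as the paper's own argument: every agent in $C$ flips its opinion at each step, so the stream oscillates with period $2$ and never converges. The paper's proof is just a terser version of yours; your added details (that binary opinions turn pairwise disagreement with $j$ into unanimity of $R_p(j)$ on $\neg O_j(p)$, that $R_p(j)\subseteq C$, and that the disagreement invariant is preserved under the simultaneous flip) simply spell out what the paper leaves implicit.
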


\begin{proof}
Let $G$ be a (serial and non-necessarily functional) influence profile, and $\O$ be an opinion profile, such that, for some $p\in \Atoms $, for all $i,j\in C$ with $C$ a connected component of $G_p$: if $i\in R_p(j)$, then $O_i(p)\neq O_j(p)$. Then, by definition of UPs, for all $i\in C$, $O^1_i(p)\neq O_i(p)$, and by repeating the same argument, for all $n\in\mathbb{N}$, $O^{n+1}_i(p)\neq O^n_i(p)$.
\end{proof}

Intuitively, the above condition for non-convergence corresponds to a situation of global maximal disagreement: \emph{all} agents (of a connected component) disagree with \emph{all} their influencers. 
Recall that a graph is properly $k$-colored if each node is assigned exactly one among $k$ colors and no node has a successor of the same color, and consider the two possible opinions on issue $p$ as colors. The above result can be reformulated in terms of proper $2$ colorings, as follows: if for some $p\in\Atoms$, $\O$ properly colors $G_p$, then $\O$ does not converge. In such a case, all agents will change their opinion on $p$ at every step, entering an oscillation of size $2$. So the maximal state of disagreement is the maximally unstable case of the dynamics. Note that this limit case of opinion distribution is yet another special case of DeGroot processes, another example within the intersection between the two frameworks of propositional opinion diffusion and DeGroot.

The possibility of such a distribution of opinions on $p$ relies on the influence graph $G_p$ being $2$-colorable, which is again a requirement about the lengths of its cycles: it is $2$-colorable if and only if it contains no cycle of odd length. However, non $2$-colorability is not a sufficient condition for convergence of UPs in general: a simple cycle of three agents, for instance, is not $2$-colorable but does not guarantee convergence either (as illustrated above with the convergence conditions for BDPs). 
Nevertheless, there is a class of influence profiles for which being $2$-colorable is a necessary condition of non-convergence of UPs, the \emph{symmetric} ones:

\begin{lemma}
\label{lemma:symm.opinionUP}
Let $\G$ be a symmetric (serial and non-necessarily functional) influence profile and $\O$ be an opinion profile. The following statements are equivalent:  
\begin{enumerate}
\item $\O$ converges in UP on $\G$;
\item For all $p\in\Atoms $, for all connected component $C$ of $G_p$, there are $i,j\in C$, such that $i\in R_p(j)$, and $O_i(p)= O_j(p)$, where $G_p = \tuple{\N, R_p}$.
\end{enumerate}
\end{lemma}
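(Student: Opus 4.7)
The plan is to prove the two directions separately, with the real work lying in $(2) \Rightarrow (1)$.

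For $(1) \Rightarrow (2)$ I would argue by contraposition. Suppose (2) fails: there exist $p \in \Atoms$ and a connected component $C$ of $G_p$ such that for every $i,j \in C$ with $i \in R_p(j)$ we have $O_i(p) \neq O_j(p)$. This is verbatim the hypothesis of Lemma~\ref{lemma:suff.UP}, so $\O$ does not converge under UP, contradicting (1).

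For $(2) \Rightarrow (1)$ I would first establish a persistence observation that uses symmetry in an essential way. Fix $p$ and suppose $O_i^n(p) = O_j^n(p)$ with $i \in R_p(j)$. By symmetry $j \in R_p(i)$ as well, so $j$ is an influencer of $i$ whose opinion coincides with $O_i^n(p)$; the influencers of $i$ are therefore not unanimously equal to $\neg O_i^n(p)$, and by the UP step $O_i^{n+1}(p) = O_i^n(p)$. The symmetric argument yields $O_j^{n+1}(p) = O_j^n(p)$, so $\{i,j\}$ remains an ``agreement edge'' at step $n+1$. Call a node $k$ \emph{fixed} at step $n$ (for issue $p$) if it is incident to some agreement edge at step $n$. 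The observation shows that a node once fixed remains fixed with the same opinion forever.

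The core step is to show that fixedness spreads across any connected component containing an agreement edge. Let $F^n \subseteq C$ denote the set of fixed nodes in a connected component $C$ of $G_p$ at step $n$, and assume $\emptyset \neq F^n \subsetneq C$. Connectedness supplies a non-fixed $k \in C \setminus F^n$ adjacent to some $v \in F^n$. Because $k$ is non-fixed, \emph{every} influencer of $k$ disagrees with $k$, i.e., has value $\neg O_k^n(p)$; in particular these influencers are unanimous, so the UP step flips $k$ to $O_k^{n+1}(p) = \neg O_k^n(p) = O_v^n(p) = O_v^{n+1}(p)$, the last equality using fixedness of $v$. Hence $\{k,v\}$ is an agreement edge at step $n+1$, so $k \in F^{n+1}$. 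Thus $|F^{n+1}| > |F^n|$ whenever $F^n$ is non-empty but proper, and since $C$ is finite the process reaches $F^n = C$ in at most $\mathrm{diam}(C)$ steps from the first step with $F^n \neq \emptyset$. By (2) we have $F^0 \neq \emptyset$ in every component of every $G_p$, so opinions stabilize on every issue in finitely many steps.

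The main obstacle is the propagation step, where three facts must be combined cleanly: (i) non-fixedness of $k$ forces its influencers to be \emph{unanimously} opposite to $O_k^n(p)$ so that the UP rule actually flips $k$ (rather than keep it); (ii) symmetry places $v$ among those influencers, so the flipped value equals $O_v^n(p)$; and (iii) fixedness of $v$ prevents $v$ from moving away during the same synchronous step. Dropping symmetry would break (ii) and with it the argument, which is consistent with the fact that Lemma~\ref{lemma:suff.UP} holds generally but the equivalence requires the symmetric setting. Everything else is combinatorial bookkeeping.
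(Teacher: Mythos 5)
Your proof is correct and follows essentially the same route as the paper's: the $(1)\Rightarrow(2)$ direction is the same appeal to Lemma~\ref{lemma:suff.UP}, and your $(2)\Rightarrow(1)$ argument—an agreement edge persists under the UP step thanks to symmetry, and stability then propagates outward through the connected component one step of distance at a time—is just a carefully spelled-out version of the paper's one-line assertion that the agreeing pair is stable and all agents at distance $\leq k$ stabilize within $k$ steps. The added detail (the ``fixed node'' bookkeeping and the flip-to-match-a-fixed-neighbour step) fills in exactly what the paper leaves implicit, so no discrepancy to report.
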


\begin{proof}
\fbox{$2) \Rightarrow 1)$} Assume that for any $p\in\Atoms$, for any connected component $C$ of $G_p$, there exist $i,j\in C$, such that $ R_p(j)$ and $O_i(p)= O_j(p)$. By definition of UP, this implies that $O_i(p)$ is stable, and that all agents with distance $\leq k$ will be stable after at most $k$ steps. \fbox{$1) \Rightarrow 2)$} This follows from Lemma~\ref{lemma:suff.UP}.     
\end{proof}

This means that opinions on a given $p$ will converge if and only if two agents influencing each other on $p$ already agree on it. We can therefore, as we did for BDPs, characterize the class of influence profiles for which all (symmetric) opinion profiles converge in UPs:

\begin{theorem}
\label{thm:symm.influenceUP}
Let $\G$ be a symmetric (serial and non-necessarily functional) influence profile. The following statements are equivalent:
\begin{enumerate}
\item All opinion profiles $\O$, converge in UPs on $\G$.
\item For all $p\in\Atoms$, and all connected components of $C \subseteq G_p$, $C$ is not $2$-colorable (contains cycle(s) of odd length), where $G_p = \tuple{\N, R_p}$. 
\end{enumerate}
\end{theorem}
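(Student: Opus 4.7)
The plan is to derive the theorem as a direct consequence of the preceding Lemmas \ref{lemma:symm.opinionUP} and \ref{lemma:suff.UP}, together with the standard graph-theoretic characterization that a graph is $2$-colorable if and only if it contains no cycle of odd length. The equivalence stated in the theorem then essentially amounts to translating the ``matching-edge'' condition of Lemma~\ref{lemma:symm.opinionUP} into the combinatorial language of proper $2$-colorings.

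For the direction $(2)\Rightarrow(1)$, I would take an arbitrary opinion profile $\O$ and aim to verify condition 2 of Lemma~\ref{lemma:symm.opinionUP}. Fix $p \in \Atoms$ and a connected component $C$ of $G_p$. By assumption $C$ contains an odd cycle $v_0, v_1, \ldots, v_{2k}=v_0$. The restriction of $O_i(p)$ to the vertices of this cycle is a $\set{0,1}$-coloring of an odd cycle; since odd cycles are not properly $2$-colorable, some consecutive pair $v_\ell, v_{\ell+1}$ must satisfy $O_{v_\ell}(p)=O_{v_{\ell+1}}(p)$. The edge $v_\ell R_p v_{\ell+1}$ in the cycle, combined with symmetry of $\G$, gives $v_\ell \in R_p(v_{\ell+1})$, so the matching-edge condition of Lemma~\ref{lemma:symm.opinionUP} holds for $C$. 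Doing this for every $p$ and every component yields convergence.

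For $(1)\Rightarrow(2)$, I would argue by contraposition. Suppose some component $C$ of some $G_p$ is $2$-colorable, and fix a proper $2$-coloring $c: C \to \set{\0,\1}$. Define an opinion profile $\O$ by setting $O_i(p):= c(i)$ for $i\in C$ and choosing $O_i(q)$ arbitrarily for all other agents and issues. Then in $C$, every edge of $R_p$ connects agents with opposite opinions on $p$, so the hypothesis of Lemma~\ref{lemma:suff.UP} is met for $p$ and $C$, and $\O$ fails to converge under UP.

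The only genuine step beyond bookkeeping is invoking the standard equivalence ``no odd cycle $\iff$ $2$-colorable'', which is classical and can be cited without argument. The main thing to be careful about is the quantifier structure: Lemma~\ref{lemma:symm.opinionUP} requires the matching-edge condition to hold \emph{simultaneously} in every component and for every issue, so the forward direction must verify it componentwise and uniformly in $\O$; while the backward direction only needs one bad $(p, C)$ pair to manufacture a non-converging $\O$, which is exactly why a single $2$-colorable component suffices to break convergence.
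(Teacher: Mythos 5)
Your proof is correct and follows essentially the same route as the paper: the forward direction extracts an agreeing adjacent pair from the odd cycle (the paper inlines the stability-spreading argument that you delegate to Lemma~\ref{lemma:symm.opinionUP}), and the converse is obtained by building a properly $2$-colored profile and invoking Lemma~\ref{lemma:suff.UP}, which is exactly how the paper's appeal to Lemma~\ref{lemma:symm.opinionUP} unwinds. Your explicit attention to the quantifier structure is a welcome clarification but not a different argument.
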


\begin{proof}
\fbox{$2) \Rightarrow 1)$} Let $p\in\Atoms$ and $C$ be connected component of $G_p$ with diameter $k$. Let $C$ contain a cycle of length $c$, with $c$ odd. Let $\O$ be an arbitrary opinion profile. Since $c$ is odd, there exist $i,j\in S$ such that $j\in R_p(i)$ and $O_i(p)=O_j(p).$ By definition of UP, this implies that $O_i(p)$ is stable, and that all agents with distance $\leq k$ will be stable after at most $k$ steps. Hence, $\O$ converges. \fbox{$1) \Rightarrow 2)$} This follows from Lemma~\ref{lemma:symm.opinionUP}.     
\end{proof}

Note that, while the basic modal language cannot capture graph $2$-colorability, it can capture non $2$-colorability, and therefore capture the class of symmetric (serial and non-necessarily functional) influence profiles which guarantee convergence of UPs. We leave the detail out for space reason.

We have shown that, for UPs in general, convergence (in a connected component) is not guaranteed if it contains no odd cycles, and that symmetric UPs guarantee convergence as soon as they contain some odd cycle. However, containing an odd cycle is a very ``easy'' requirement for a real-life influence network to meet (it corresponds to a non-zero clustering coefficient). By contrast, recall that BDPs guarantee convergence (on {\em any} opinion profile) only when they contain only cycles of size $1$, which is a rather implausible requirement to be satisfied on real influence networks.

%%%%%%%%%%%%%%%%%%%%%%%%%%%%%%%%%%%%%%%%%%%%%%%%%%%%%%%%

\subsection{BDPs on logically interdependent issues}

So far we have assumed the aggregation to happen on a set of issues without constraint (or rather with $\gamma = \top$). In this subsection we study what happens in the presence of a constraint $\gamma \neq \top$. BDPs on aggregation structures with constraints may lead individuals to update with logically inconsistent opinions. But the diffusion perspective whereby agents copy the opinion of trustees rather than delegating their voting right better lends itself to an assumption of individual rationality. 

The following processes are simple adaptations of BDPs where agents update their opinions only if the opinions of their influencers, on the respective issues, are consistent with the constraint.\footnote{Other update policies are of course possible. A recent systematic investigation of opinion diffusion on interconnected issues is \cite{Botan16}.}

\begin{definition}
Fix an opinion profile $\O$, an influence profile $\G$, and a  constraint $\gamma$. Consider the stream $\O^0, \O^1, \ldots, \O^n, \ldots$ of opinion profiles recursively defined as follows:
\begin{itemize}%[noitemsep]
\item Base: $\O_0 := \O$
\item Step: for all $i \in \N$, $p\in \Atoms$, 
\begin{align*}
O_i^{n+1}(p) := 
\left\{
\begin{array}{ll}
O^{n}_{R_p(i)}(p) & \mbox{if    }  \bigwedge_{p \in \Atoms} O^{n}_{R_p(i)}(p) \wedge \gamma  \mbox{    is consistent} \\
O_i^{n}(p) & \mbox{otherwise}
\end{array}
\right.
\end{align*}
\end{itemize}
where $G_p = \tuple{\N, R_p}$.
We call processes defined by the above dynamics \emph{individually rational} BDPs.
\end{definition}

Individually rational BDPs converge in some cases in which BDPs do not. There are cases in which there is disagreement in the cycles but the process still converges, because of the safeguard towards individual rationality built into the dynamics.

%\begin{figure}[h!]
%\begin{center}
%\includegraphics[width=0.7\columnwidth]{default_figure}
%\caption{{{Example of a resistant BDP with constraint $p\leftrightarrow \neg q$%
%}%
%}}
%\end{center}
%\end{figure}

\begin{example}
Consider the following example. Let $\N = \set{1,2}$, $\I = \set{p,q}$ and $\gamma =\{p\leftrightarrow \neg q\}$. Let then $G = \tuple{N, \set{R_i}_{i \in \I}}$ be as follows: $1R_q1$, $2R_q2$, $1R_p2$ and $2R_p1$. Finally let $\O$ be such that $O_1(p) = O_2(q) = \1$, $O_2(p) = O_1(q) = \0$. Voters $1$ and $2$ form a non-unanimous cycle, but $\O$ is a stable opinion profile.
\end{example}
The example shows that direction $1)\Rightarrow 2)$ of Theorem \ref{theorem:opinion} does not hold for individually rational BDPs: some individually rational BDPs may stabilize even in the presence of disagreement within a cycle. Intuitively, the reason why this happens is that individually rational BDPs that stabilize even when disagreements occur within cycles do so because their cycles are not "synchronized". In the above example, given the constraint $p\leftrightarrow\neg q$, the only way to get stabilization starting from a situation respecting the constraint is to have a cycle of influence for $q$ which goes `in the opposite direction' from the one from $p$, all other cases would amount to violate the constraint.

\medskip

Beyond this simple example, we want to find out what happens with more complex constraints and what are the  conditions for individually rational BDPs to converge.  Let us first show that direction $2)\Rightarrow 1)$ of Theorem \ref{theorem:opinion} still holds, that is, individually rational BDPs without disagreement in their cycles always converge:

\begin{theorem}
\label{theorem:resistantsufficient}
Let $\G$ be an influence profile, $\O$ be an opinion profile, and $\gamma$ a constraint. Then the following holds: {\em if} for all $p\in \Atoms$, for all $S\subseteq\N$ such that $S$ is a cycle in $G_{p}$, and all $i,j\in S$: $O_i(p)=O_j(p)$, {\em then} the individually rational BDP for $\O$, $\G$ and $\gamma$ converges in at most $k$ steps, where $k\leq \max\{diam(G_p)|p\in P\}$.
\end{theorem}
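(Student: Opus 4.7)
My plan is to generalize the easy direction $2)\Rightarrow 1)$ of Theorem~\ref{theorem:opinion}, adapting the argument to accommodate the consistency check that distinguishes an individually rational BDP from a plain BDP. The first step is to re-establish cycle stability: for each $p\in\Atoms$ and each $j$ in the (unique, by Fact~\ref{fact:uniquecycle}) cycle of $G_p$, the value $O_j^n(p)$ equals the common cycle value $x_p$ for every $n$. By induction on $n$, at step $n+1$ either $j$ fails its consistency check and retains $x_p$, or $j$ updates by copying from $R_p(j)$, which lies in the same cycle (cycles are closed) and carries $x_p$ by the inductive hypothesis.

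Next, I would record the following observation about individually rational dynamics. For any agent $i$, let $\sigma^n(i):=(O_{R_q(i)}^n(q))_{q\in\Atoms}$ denote its vector of sources. Once $\sigma^n(i)$ becomes constant from some step $N$, the consistency check for $i$---which depends only on $\sigma^n(i)$ and $\gamma$---has a fixed outcome from $N$ onward. Consequently $O_i^n$ stabilizes within one further step: either $i$ always updates to the constant source value, or it never updates and retains $O_i^N$.

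I would then prove by strong induction on $D(i):=\max_{q\in\Atoms}d_q(i)$ that $O_i^n$ is stable for $n\geq D(i)$; since $D(i)\leq k=\max_p \mathit{diam}(G_p)$, this yields convergence in at most $k$ steps. The base case $D(i)=0$ follows from cycle stability applied to every issue. In the inductive step, I argue that each coordinate $O_{R_q(i)}^n(q)$ of $\sigma^n(i)$ is stable by step $D(i)-1$---trivially when $d_q(i)=0$, or by unrolling the inductive analysis along the $G_q$-path from $R_q(i)$ toward the cycle of $G_q$---and then invoke the observation above to conclude stability of $O_i$ at step $D(i)$.

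The main obstacle lies in this inductive step: a naive per-issue induction on $d_p(i)$ fails because individual rationality produces a ``delayed update'' phenomenon, in which $O_i^n(p)$ can change after step $d_p(i)$ when a source on a deeper issue is still holding up the consistency check. Using $D(i)$ as the induction parameter absorbs this slack, but closing the induction requires care: since $D(R_q(i))$ need not drop below $D(i)$, the recursion has to be carried out coordinatewise on the single-issue stabilization times of the sources, rather than on the full-opinion stabilization times of intermediate agents.
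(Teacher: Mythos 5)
Your proposal follows the same strategy as the paper's own proof---an induction on the maximal distance $D(i)$ from an agent to the cycles of the graphs $G_p$, with the claim that agent $i$ is stable from step $D(i)$ on---and you correctly isolate exactly where this is delicate: the cross-issue consistency check produces delayed updates, so per-issue distances do not control when a coordinate settles. But the step you defer to (``each coordinate $O^n_{R_q(i)}(q)$ is stable by step $D(i)-1$, by unrolling along the $G_q$-path'') is not merely in need of care; it is false. The time at which $O^n_j(q)$ settles is bounded neither by $d_q(j)$ nor by $D(j)$, because whether $j$ copies at a given step depends on its sources on \emph{all} issues, and these delays chain through alternating issues in a way no single graph $G_q$ sees. (The paper's own proof elides the same point with the unjustified sentence ``this implies that all influencers of $i$ are stable'', so this is not an idea you could have recovered from it.)

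Concretely, let $\Atoms=\{p,q\}$, $\gamma = p \leftrightarrow q$, and agents $z,2,3,\dots,L$, where $z$ has self-loops on both issues and opinion $(1,1)$, every other agent initially holds $(0,0)$, and influence is: for even $k<L$, $R_p(k)=z$ and $R_q(k)=k+1$; for odd $k<L$, $R_p(k)=k+1$ and $R_q(k)=z$; and $R_p(L)=R_q(L)=z$. All cycles are the singleton $\{z\}$, so the hypothesis of the theorem holds, and $diam(G_p)=diam(G_q)=2$, hence $D(i)\leq 2$ for all $i$. Yet agent $L$ flips to $(1,1)$ at step $1$, and each agent $k$ flips exactly one step after agent $k+1$, since until then its check is blocked by the disagreeing source on the other issue; so agent $4$ stabilizes only at step $3>D(4)$ and agent $2$ only at step $L-1$. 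Thus your inductive claim fails, and in fact the quantitative bound $k\leq\max\{diam(G_p)\mid p\in\Atoms\}$ in the statement itself is false, so no elaboration of the $D(i)$-induction (nor of the paper's version of it) can close the gap. What does survive is the qualitative claim: by induction on $d_p(i)$, once the source coordinate of $(i,p)$ is constant that coordinate changes at most once more, so every coordinate changes only finitely often and the process converges---but within a number of steps that can grow with $|\N|$, not with the diameters. A correct write-up must therefore weaken the bound (or strengthen the hypotheses), and your argument, as it stands, proves only this weaker statement once the ``stable by step $D(i)$'' claim is dropped.
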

\begin{proof}
Assume that for all $p\in \Atoms$, for all $S\subseteq\N$ such that $S$ is a cycle in $G_{p}$, for all $i,j\in S$: $O_i(p)=O_j(p)$.
%Consider an arbitrary $p\in \Atoms$, and 
Consider an arbitrary $i\in \N$. 
%Let $k$ be the distance from $i$ to $l$, where $l$ is the closest agent in a cycle $S\subseteq\N$ of $G_p$. 
Let $k_i(p)$ be the distance from $i$ to the closest agent in a cycle of $G_p$, and let $k_i$ denote $max \{k_i(p)|p\in P\}$. We show that for any $k_i\in\mathbb{N}$, $O^{k_i}_ i$ is stable.
\begin{itemize}
\item If $k_i=0$: $i$ is its only infuencer, therefore $O^0_{i}$ is stable. 
\item If $k_i=n+1$: Assume that for all agents $j$ such that $k_j=n$, $O^{k_j}_ j$ is stable. This implies that all influencers of $i$ are stable. We need to consider the following cases:
\begin{enumerate}
\item If $\bigwedge_{p \in \Atoms} O^{m}_{R_p(i)}(p) \wedge \gamma$ is not consistent, then it will never be: $O^{n}_i$ is stable. 
\item If $\bigwedge_{p \in \Atoms} O^{m}_{R_p(i)}(p) \wedge \gamma$ is consistent, 
then $O^{n+1}_i$ is stable. 
\end{enumerate}
\end{itemize}
This completes the proof.
 \end{proof}
 
\subsection{Section Summary} 
 
In this section we studied a very simple class of opinion diffusion processes on networks (Boolean DeGroot processes, BDPs), which precisely capture the vote-copying behavior suggested by a standard interpretation of the liquid democracy system. Interestingly these processes lie at the interface of two so far unconnected network diffusion models: the well-known DeGroot processes---of which BDPs constitute the binary special case---and of propositional opinion diffusion processes---of which BDPs constitute the special case where the set of neighbors is a singleton. 
We established necessary and sufficient conditions for convergence, which can be captured in modal fixpoint logics as we will show in the next section. We argued that these results provide a novel angle on the issue of delegation cycles in liquid democracy.
 
There are a number of further questions concerning, especially, individually rational BDPs that we leave for future investigations: What are the necessary conditions for their stabilization? What opinions are reachable? And, in particular, when is a consensus reached? Finally, one could consider other types of influence policies than the one used in individually rational BDPs. For instance, agents may be allowed to `pass through' an inconsistent state at some point, in which case one can wonder under which conditions the process can still converge to a consistent state. Indeterministic policies would also make sense, where an agent confronted with inconsistent opinions from her influencers keeps one of the closest consistent opinions set, rather than not being influenced at all (cf. \cite{Botan16}).

%%%%%%%%%%%%%%%%%%%%%%%%%%%%%%%%%%%%%%

\section{Fixpoint Logics for BDPs} \label{sec:logic}

In this section we show how a well-established logic for formal verification can be readily used to specify and reason about properties of BDPs, and in particular their convergence. The logic is the so-called $\mu$-calculus. This points to a so-far unexplored interface between fixpoint logics and models of opinion dynamics---like the DeGroot model and propositional opinion diffusion. The section moves some first steps in that direction along the lines of another recent work \cite{JvBoscillations}, where the $mu$-calculus, and extensions thereof, have been applied to the study of dynamical systems.

\subsection{Influence graphs as Kripke models}

We treat influence graphs as Kripke (multi-relational) models \cite{Seligmanetal:synthese,Christoff_2015}.%Christoff_2013,Christoff_2015,ZoePhD

\begin{definition}
We call an {\em influence model} a tuple $\Model = \tuple{\N, \G, \O}$ where $\G=(G_{p_1},\dots,G_{p_m})$ is an influence profile, and $\O: \Atoms \longrightarrow 2^\N$ is an opinion profile over $\Atoms$, that is, a valuation function.
\end{definition}

One can therefore easily interpret a modal language over influence models, where modalities are interpreted on the accessibility relations in $\G$. That is, to each graph $G_p$ we associate modalities $\lbox{p}$ and $\ldia{p}$. We will give the details below, but let us immediately note that the class of (possibly infinite) influence graphs would then be characterized by the following properties, for any $p \in \Atoms$:
\begin{align}
\lbox{p} \phi \limp \ldia{p} \phi & & \mbox{(seriality)} \\
\ldia{p} \phi \limp \lbox{p} \phi & & \mbox{(functionality)}
\end{align}
More precisely, for any influence profile $\G=(G_{p_1},\dots,G_{p_m})$, formula $\lbox{p_i} \phi \limp \ldia{p_i} \phi$ (respectively, $\ldia{p_i} \phi \limp \lbox{p_i} \phi$) is valid in such graph---that is, true in any pointed influence model built on such graph---if and only if each $G_{p_i}$ consists of a serial (respectively, functional) relation.\footnote{These are known results from modal correspondence theory (cf. \cite{Blackburn_2001}).} Put otherwise, on serial and functional graphs the modal box and diamond are equivalent.

\subsection{Modal $\mu$-calculus}

\[
\L^\mu: \phi ::=  p \mid \bot  \mid \neg \phi \mid \phi \land \phi \mid \ldia{p} \phi \mid \mu p. \phi(p)
\]

The language of the $\mu$-calculus expands the basic modal language with a least fixpoint operator $\mu$. Here is the BNF of the language:

where $p$ ranges over $\Atoms$ and $\phi(p)$ indicates that $p$ occurs free in $\phi$ (i.e., it is not bounded by fixpoint operators) and under an even number of negations.\footnote{This syntactic restriction guarantees that every formula $\phi(p)$ defines a set transformation which preserves $\subseteq$, which in turn guarantees the existence of least and greatest fixpoints by the Knaster-Tarski fixpoint theorem (cf. \cite{Stirling_2001}).} In general, the notation $\phi(\psi)$ stands for $\psi$ occurs in $\phi$. The usual definitions for Boolean and modal operators apply. Intuitively, $\mu p. \phi(p)$ denotes the smallest formula $p$ such that $p \lequiv \phi(p)$. The greatest fixpoint operator $\nu$ can be defined from $\mu$ as follows: $\nu p. \phi(p) := \neg \mu p. \neg \phi( \neg p)$.

We interpret $\L^\mu$ on influence models as follows:

\begin{definition}
Let $\phi \in \L^\mu$. The satisfaction of $\phi$ by a pointed influence model $(\Model, i)$ is inductively defined as follows:
\begin{align*}
\Model, i \not\models \bot     &           \\
\Model, i \models p   \   &      \IFF            i \in \O(p),      \mbox{        for       }    p \in {\bf P}  \\
\Model, i \models \neg \phi     &      \IFF       i \not\in \true{\phi}_\Model  \\
\Model, i \models \phi_1 \wedge \phi_2     &      \IFF     i \in  \true{\phi_1}_\Model \cap  \true{\phi_2}_\Model \\
\Model, i \models  \ldia{p} \phi &     \IFF       i \in \{ j \mid \exists k: j G_p k \ \& \ k \in \true{\phi}_\Model \} \\
\Model, i \models  \mu p. \phi(p) &      \IFF      i \in \bigcap \{ X \in 2^\N \mid \true{\phi}_{\Model[p:=X]}  \subseteq  X    \}
\end{align*}
where $\true{\phi}_{\Model[p:=X]}$ denotes the truth-set of $\phi$ once $\O(p)$ is set to be $X$. As usual, we say that: $\phi$ is valid in a model $\Model$ iff it is satisfied in all points of $\Model$, i.e., $\Model \models \phi$; $\phi$ is valid in a class of models iff it is valid in all the models in the class.
\end{definition}
We list some relevant known results about $\K^{\mu}$. The logic has a sound and (weakly) complete axiom system \cite{Walukiewicz_2000}. The satisfiability problem of $\K^{\mu}$ is decidable \cite{Streett_1984}. The complexity of the model-checking problem for $\K^{\mu}$ is known to be in NP $\cap$ co-NP  \cite{Gr_del_1999}. It is known that the model-checking problem for a formula of size $m$ and alternation depth $d$ on a system of size $n$ can be solved by the natural fixpoint-approximation algorithm with (time) complexity of $O((m \cdot n)^{d+1})$ \cite{Emerson96}, where the alternation depth of a formula of $\L^\mu$ is the maximum number of $\mu/\nu$ alternations in a chain of nested fixpoint subformulas.\footnote{The reader is referred to, e.g. \cite{Emerson_2001}, for the precise definition.} Finally, the $\mu$-calculus is known to be invariant for bisimulation (cf. \cite{Blackburn_2001}). It is actually known to correspond to the bisimulation-invariant fragment of monadic second-order logic \cite{Janin_1996}.

\subsection{On the logic of convergence in BDPs}

Each stream of opinion profiles $\O^0, \O^1, \ldots, \O^n, \ldots$ corresponds to a stream of influence models $\Model^0, \Model^1, \ldots, \Model^n, \ldots$.

From the point of view of an influence model $\Model = \tuple{\N, \G, \O}$ the BDP dynamics of Definition \ref{def:BDP} can therefore be recast in terms of updates of the valuation function $\O$ as follows:

\begin{itemize}
\item Base: $\O^0 := \O$
\item Step: $\O^{n+1}(p) := \true{\lbox{p}p}_{\Model^n}$.
\end{itemize}
That is, the interpretation of $p$ at step $n+1$ is the interpretation of $\lbox{p}p$ at step $n$. Equivalently, the interpretation of $\neg p$ at step $n+1$ is the interpretation of $\lbox{p}\neg p$ at step $n$.

\begin{lemma}
\label{lemma:stable}
Let $\Model = \tuple{\N, \G, \O}$ be an influence model. The two following statements are equivalent:
\begin{enumerate}
\item $i \in \N$ is stable for $p$;
\item The pointed model $(\Model, i)$ satisfies:\footnote{Notice that $\pm p$ is used as a variable ranging over $\set{p, \neg p}$. Technically the above formula is to be read as a scheme for $\nu x. p \land \lbox{p} x$ and $\nu x. \pm \neg p \land \lbox{p} x$.}
\begin{align}
\Stb(p) := \nu x. \pm p \land \lbox{p} x 
\end{align}
\end{enumerate}
\end{lemma}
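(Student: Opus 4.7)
The plan is to unfold the Knaster--Tarski semantics of the greatest fixpoint and match it with the BDP dynamics of Definition~\ref{def:BDP}, exploiting the fact that each $R_p$ is functional and that $\N$ is finite. The key preparatory observation, verified by an easy induction on $n$ from the recursion $O^{n+1}_j(p) = O^n_{R_p(j)}(p)$, is that for every $j \in \N$ and every $n \in \mathbb{N}$,
\[
O^n_j(p) \;=\; O^0_{R_p^n(j)}(p),
\]
where $R_p^n$ denotes the $n$-fold composition of the endomap associated with $R_p$. Hence agent $i$ is stable for $p$ with constant value $v \in \set{\0,\1}$ exactly when $O^0_{R_p^n(i)}(p) = v$ for all $n \geq 0$, and the lemma reduces to showing that this condition is captured by the appropriate instance of the scheme $\Stb(p)$.

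For the direction from formula to stability, I would take $(\Model, i) \models \nu x.\, \phi \land \lbox{p} x$ for some choice $\phi \in \set{p, \neg p}$ and write $S$ for the truth set of this formula in $\Model$. By Knaster--Tarski, $S$ is the greatest $X \subseteq \N$ with $X \subseteq \true{\phi}_\Model$ and, using functionality to rewrite $\true{\lbox{p} X}_\Model$, with $R_p(j) \in X$ for every $j \in X$. A straightforward induction on $n$ starting from $i \in S$ then yields $R_p^n(i) \in S \subseteq \true{\phi}_\Model$ for every $n$; the preparatory observation translates this into the desired constancy of $O^n_i(p)$.

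For the converse, I would suppose $i$ is stable with constant value $v$, let $\phi = p$ if $v = \1$ and $\phi = \neg p$ if $v = \0$, and consider
\[
T \;=\; \set{j \in \N \mid O^0_{R_p^n(j)}(p) = v \text{ for every } n \geq 0},
\]
which contains $i$ by the preparatory observation. Verifying that $T$ is a post-fixpoint of the monotone operator $X \mapsto \true{\phi}_\Model \cap \set{j \mid R_p(j) \in X}$ is immediate: for $j \in T$, the case $n = 0$ gives $j \in \true{\phi}_\Model$, and the forward trajectory of $R_p(j)$ is a tail of that of $j$, so $R_p(j) \in T$ as well. Knaster--Tarski then places $T$ inside the greatest fixpoint, so $(\Model,i) \models \nu x.\, \phi \land \lbox{p} x$ and hence $(\Model, i) \models \Stb(p)$.

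The step that requires the most care is not either direction of the fixpoint argument but the precise reading of ``stable'' in statement~1: since $\nu x.\, p \land \lbox{p} x$ requires $p$ to hold at $i$ already in $\Model$, the intended notion must be that $O^n_i(p)$ is constant in $n$ from step $0$ onwards, not the a priori weaker ``eventually constant'' notion defined earlier in the text (the latter would fail at nodes whose trajectory passes through $\neg p$ before entering a $p$-monochromatic cycle). Once this is pinned down, the real content of the proof sits in the identity $O^n_j(p) = O^0_{R_p^n(j)}(p)$ enabled by functionality, and both directions follow by a routine application of Knaster--Tarski to the operator $X \mapsto \true{\phi}_\Model \cap \set{j \mid R_p(j) \in X}$.
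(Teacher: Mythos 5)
Your proof is correct and takes essentially the same route as the paper's: both rest on reading the greatest fixpoint $\Stb(p)$ as ``every agent on the forward $R_p$-trajectory of $i$ agrees on the same literal'' (the paper phrases this as $\lbox{p^*}\pm p$, you obtain it via a Knaster--Tarski post-fixpoint induction) and on the functionality-based identity $O^n_j(p) = O^0_{R_p^n(j)}(p)$ linking the formula to the BDP dynamics. Your explicit observation that ``stable'' must mean constant from step $0$ rather than the eventually-constant notion of ``stabilizes'' is exactly the reading the paper's own proof implicitly relies on (and which its subsequent theorem on stabilization presupposes), so flagging it is a point in your favour rather than a divergence.
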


\begin{proof}
First of all observe that, by the semantics of the $\mu$-calculus, formula $\Stb(p)$ denotes the largest fixpoint of function $\pm p \land \lbox{p}(\cdot)$, that is, formula $\lbox{p^*} \pm p$ where $\lbox{p^*}$ is the modal box interpreted over the reflexive and transitive closure of $G_p$.
\fbox{$1) \Rightarrow 2)$} Assume that $i$ is stable for $p$ and suppose towards a contradiction that $\Model, i \not\models \Stb(p)$. By what said above, it follows that there exists a $j$ such that $\O_i (p) \neq \O_j(p)$ which is connected by a finite $G_p$ path to $i$. By the functionality of influence models and the dynamics of Definition \ref{def:BDP} then at some stage $n$ in the stream of opinion profiles it should hold that $\O^n_i(p) = \O_j(p)$, against the assumption that $i$ be stable for $p$.
\fbox{$2) \Rightarrow 1)$} Assume $\Model, i \models \Stb(p)$. By what said above, this implies that there exists no $j$ such that $\O_i (p) \neq \O_j(p)$ which is connected by a finite $G_p$ path to $i$. It follows that in the stream generated by the BDP dynamics $i$ cannot change its opinion, and hence it is stable.
\end{proof}

\begin{theorem}
\label{theorem:mu}
Let $\Model = \tuple{\N, \G, \O}$ be an influence model. The two following statements are equivalent:
\begin{enumerate}
\item $i \in \N$ stabilizes for issue $p \in \Atoms$;
\item The pointed model $(\Model, i)$ satisfies:
\begin{align}
\mu x. \Stb(p) \vee \lbox{p} x
\end{align}
\end{enumerate}
\end{theorem}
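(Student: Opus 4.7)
The plan is to compute the extension of the fixpoint formula explicitly via the Knaster--Tarski approximation, interpret it as ``a stable node is reachable along $R_p$ from $i$ in finitely many steps,'' and then use Lemma~\ref{lemma:stable} together with a routine unfolding of the BDP dynamics to match this with the notion of stabilization.

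First I would compute $\true{\mu x.\ \Stb(p) \vee \lbox{p} x}_\Model$. Writing $\Phi(X) := \true{\Stb(p)}_\Model \cup \true{\lbox{p}}_\Model(X)$ and using functionality of $R_p$, we have $\true{\lbox{p}}_\Model(X) = \{j \in \N \mid R_p(j) \in X\}$. By Knaster--Tarski, $\true{\mu x.\ \Stb(p) \vee \lbox{p} x}_\Model = \bigcup_{n \geq 0} \Phi^n(\emptyset)$, and a straightforward induction on $n$ shows
\begin{align*}
\Phi^n(\emptyset) = \{ j \in \N \mid R_p^k(j) \in \true{\Stb(p)}_\Model \text{ for some } k < n \}.
\end{align*}
Hence $(\Model, i) \models \mu x.\ \Stb(p) \vee \lbox{p} x$ iff there exists $n \in \mathbb{N}$ with $R_p^n(i) \in \true{\Stb(p)}_\Model$, i.e., some node on the $R_p$-trajectory from $i$ is stable in the sense of Lemma~\ref{lemma:stable}.

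Next I would unfold the BDP dynamics: by an easy induction on $n$ using the step clause of Definition~\ref{def:BDP} together with functionality, $O^n_i(p) = O^0_{R_p^n(i)}(p)$. For the direction $(2) \Rightarrow (1)$, assume $j := R_p^n(i)$ satisfies $\Stb(p)$. By Lemma~\ref{lemma:stable}, every node $R_p^k(j)$ reachable from $j$ carries the same opinion as $j$ on $p$. For any $m \geq n$, $R_p^m(i) = R_p^{m-n}(j)$, so $O^m_i(p) = O^0_{R_p^{m-n}(j)}(p) = O^0_j(p)$, showing that $i$ stabilizes from step $n$ onward. For $(1) \Rightarrow (2)$, suppose $i$ stabilizes at step $n$ and set $j := R_p^n(i)$. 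Any node reachable from $j$ has the form $R_p^k(j) = R_p^{n+k}(i)$, and by assumption $O^0_{R_p^{n+k}(i)}(p) = O^{n+k}_i(p) = O^n_i(p) = O^0_j(p)$. Thus all $R_p^*$-successors of $j$ share $j$'s opinion, which (by the same argument used in the proof of Lemma~\ref{lemma:stable}, namely that $\Stb(p)$ defines the extension of $\lbox{p^*}\pm p$) gives $(\Model, j) \models \Stb(p)$, so $i$ satisfies the fixpoint formula.

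The main obstacle I expect is not conceptual but bookkeeping: writing the Knaster--Tarski unfolding cleanly for the functional case (so that $\lbox{p}$ acts as a ``predecessor along the unique successor'' operator), and being careful that the equivalence $\Stb(p) \equiv \lbox{p^*}\pm p$ imported from Lemma~\ref{lemma:stable} is applied at the correct base point $j = R_p^n(i)$ rather than at $i$ itself. Once these are handled, both directions reduce to the simple identity $O^n_i(p) = O^0_{R_p^n(i)}(p)$.
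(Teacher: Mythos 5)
Your proof is correct and takes essentially the same route as the paper: unfold the least fixpoint by Knaster--Tarski approximation so that $\mu x.\ \Stb(p) \vee \lbox{p}x$ reads ``some node on the $R_p$-trajectory from $i$ satisfies $\Stb(p)$,'' and then connect this to stabilization via Lemma~\ref{lemma:stable}. If anything, your explicit identity $O^n_i(p)=O^0_{R_p^n(i)}(p)$, with $\Stb(p)$ applied at the correct base point $j=R_p^n(i)$ in the original model, is tighter bookkeeping than the paper's own argument, which invokes Lemma~\ref{lemma:stable} at $i$ in the updated model $\Model^n$ and then transfers back to $\Model$ without spelling out this step.
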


\begin{proof}
First of all observe that, by the semantics of the $\mu$-calculus $\mu x. \Stb(p) \vee \lbox{p} x$ denotes the smallest fixpoint of equation $x \lequiv \Stb(p) \vee \lbox{p} x$. By the Knaster-Tarski theorem and the fact that influence models are finite, we can compute such fixpoint as $\bigcup_{0 \leq n < \omega} \true{\Stb(p)^n}$ where $\true{\Stb(p)^0} = \true{\Stb(p) \vee \lbox{p} \bot}$ (notice that $\lbox{p} \bot \lequiv \bot$ on influence models) and $\true{\Stb(p)^{n+1}} = \true{\Stb(p) \vee \lbox{p}\Stb(p)^n}$. So, by Lemma \ref{lemma:stable} $i$ belongs to $\true{\mu x. \Stb(p) \vee \lbox{p} x}$ either $i$ is stable for issue $p$ or has access in a finite number of steps to a an agent who is stable for $p$.
\fbox{$1) \Rightarrow 2)$} Assume that $i$ stabilizes for issue $p \in \Atoms$. So there exists a stage $n$ in the stream of profiles generated through Definition \ref{def:BDP} at which $\O_i^n(p) = \O_i^{m}(p)$ for all $m > n$. By Lemma \ref{lemma:stable}, $\tuple{\N, \G, \O^n}, i \models \Stb(p)$. It follows that $i$ is connected through a finite $G_p$-path to an agent $j$ such that $\Model, j \models \Stb(p)$. By what established above we thus have that $\Model, i \models \mu x. \Stb(p) \vee \lbox{p} x$.
\fbox{$2) \Rightarrow 1)$} Assume $\Model, i \models \mu x. \Stb(p) \vee \lbox{p} x$. It follows that $i$ is connected through a finite $G_p$-path to an agent $j$ such that $\Model, j \models \Stb(p)$. By Lemma \ref{lemma:stable} $j$ is therefore stable and therefore $i$ will stabilize for $p$.
\end{proof}
So the formula that expresses the stabilization of the agents' opinions on one issue is $\mu x. \left(\nu y. \pm p \land \lbox{p} y \right) \vee \lbox{p} x$.
Informally, the theorem states that in a BDP an agent reaches a stable opinion if and only if it has an indirect influencer (linked by an influence path) whose all direct and indirect influencer have the same opinion. Notice that such formula has alternation depth $0$. So an off-the-shelf model-checking algorithm for the $\mu$-calculus can check stabilization in time $O(m \cdot n)$ with $n$ being the size of the model and $m$ the size of the formula.

Now confront this with the earlier Theorem \ref{theorem:mu}. Since the convergence of the BDP is equivalent to the stabilization of all agents on all issues $p$ (either on $p$ or $\neg p$), we have the following corollary:

\begin{corollary}
The BDP for an opinion profile $\O$ based on influence graph $\G$ converges if and only if
\begin{align}
\tuple{\N, \G, \O}, i \models U \left(\bigwedge_{p \in \Atoms} \mu x. \Stb(p) \vee \lbox{p} x \right)
\end{align}
for any agent $i \in \N$, where $U$ denotes the universal modality (cf. \cite{Blackburn_2001}).
\end{corollary}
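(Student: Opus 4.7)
The plan is to reduce the corollary directly to Theorem \ref{theorem:mu} via two observations: first, that convergence of the BDP is a quantifier combination over agents and issues of agent‑wise stabilization, and second, that the universal modality $U$ and the indexed conjunction $\bigwedge_{p \in \Atoms}$ precisely encode those two quantifiers at the object‑language level.

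Concretely, I would first unpack convergence of the BDP. By definition, the stream $\O^0, \O^1, \ldots$ generated from $\O$ on $\G$ converges iff there is an $n$ with $\O^m = \O^n$ for all $m \geq n$. Since opinion profiles are functions whose components are the individual opinions, this is equivalent to: for every $i \in \N$ and every $p \in \Atoms$, the sequence $O_i^0(p), O_i^1(p), \ldots$ is eventually constant, i.e., $i$ stabilizes for $p$ in the sense used in Lemma \ref{lemma:stable} and Theorem \ref{theorem:mu}.

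Next, I would invoke Theorem \ref{theorem:mu} pointwise: for each fixed $i$ and $p$, agent $i$ stabilizes for $p$ iff $(\Model, i) \models \mu x.\, \Stb(p) \vee \lbox{p} x$. Taking the conjunction over $p \in \Atoms$ then gives that $i$ stabilizes on all issues iff
\[
(\Model, i) \models \bigwedge_{p \in \Atoms} \mu x.\, \Stb(p) \vee \lbox{p} x.
\]
Finally, by the standard semantics of the universal modality (cf.\ \cite{Blackburn_2001}), a formula $\phi$ satisfies $(\Model, i) \models U\phi$ at one (equivalently, any) point $i$ iff $\phi$ is true at every point of $\Model$. Combining these equivalences yields the claimed biconditional, and the fact that the right‑hand side is agent‑independent (because $U\psi$ is a global formula) validates the ``for any agent $i \in \N$'' clause.

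There is no real obstacle in this argument: all the technical content lives in Theorem \ref{theorem:mu}, and what remains is a one‑line translation between ``convergence of the profile'' and ``stabilization of every agent on every issue,'' together with the observation that $U$ internalizes the agent‑level universal quantifier. The only thing worth double‑checking is that $\Atoms$ is finite (which it is, by the definition of a BA structure), so that the object‑language conjunction $\bigwedge_{p \in \Atoms}$ is a well‑formed formula of $\L^\mu$.
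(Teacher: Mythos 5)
Your proposal is correct and follows essentially the same route as the paper, which derives the corollary immediately from Theorem \ref{theorem:mu} via the observation that convergence of the BDP is equivalent to every agent stabilizing on every issue, with the finite conjunction over $\Atoms$ and the universal modality $U$ internalizing the quantification over issues and agents (finiteness of $\N$ and $\Atoms$ licensing the uniform choice of the stabilization stage). Nothing further is needed.
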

So the above formula characterizes the property of convergence for a BDP. Since the process of voting in a liquid democracy system can be modeled by a BDP, the formula also characterizes precisely when voting by delegable proxy results in a $\1$ or $\0$ opinion on a given issue. 

%%%%%%%%%%%%%%%%%%%%%%%%%%%%%%%%%%%%%%

\section{Conclusions} \label{sec:conclusions}

The paper has moved the first steps towards the development of theoretical foundations for the voting system of liquid democracy based on delegable proxy. 

We have pursued two lines of research linked to two interpretations commonly associated to the proxy character of liquid democracy: the delegation of voting right to trustees, vs. the copying of the votes of influencers. The first interpretation has led us to develop a simple model of liquid democracy based on the theory of binary and judgment aggregation. This has allowed us to study liquid democracy as a form of binary aggregation with abstentions. The second interpretation has led us to study liquid democracy through extremely simple models of opinion diffusion corresponding to the Boolean special case of the stochastic processes of opinion diffusion known as DeGroot processes. We have argued that studying aggregation in liquid democracy through this lens offers important advantages with respect to the handling of delegation cycles and the preservation of individual rationality. Through this second perspective we have also shown how off-the-shelf logical techniques can be used to analyze properties (such as convergence) of the diffusion process underpinning liquid democracy.

%%%%%%%%%%%%%%%%%%%%%%%%%%%%%%%%%%%%%%

%%%%%%%%%%%%%%%%%%%%%%%%%%%%%%%%%%%%%%

\bibliographystyle{plain}
\bibliography{biblio.bib}

%%%%%%%%%%%%%%%%%%%%%%%%%%%%%%%%%%%

\appendix

\section{Binary aggregation (without abstention)} \label{appendix:binary}

The formalism of choice for this paper is binary aggregation \cite{grandi13lifting}.
A binary aggregation structure (\emph{BA structure}) is a tuple $\S = \tuple{\N,\Atoms,\gamma}$ where:
\begin{itemize}
%[noitemsep]
\item $\N = \set{1,\dots,n}$ is a finite set individuals s.t. $|\N|= n \in \mathbb{N}$;
\item $\Atoms = \set{p_1,\dots,p_m}$ is a finite set of issues ($|\Atoms|= m \in \mathbb{N}$), each represented by a propositional atom;
\item $\gamma \in \L$ is an (integrity) constraint, where $\L$ is the propositional language constructed by closing $\Atoms$ under a functionally complete set of Boolean connectives (e.g., $\set{\neg, \wedge}$)
\end{itemize}

An {\em opinion} $O: \Atoms \to \set{\0,\1}$ is an assignment of truth values to the set of issues $\Atoms$, and the set of all opinions is denoted by $\D$. The opinion of an agent $i$ is said to be ``consistent" whenever $O_i \models \gamma$, that is, $i$'s opinion satisfies the integrity constraint. The set of all consistent opinions is denoted $\D_c = \set{O \in \D \mid O \models \gamma}$. 
Thus, $O(p)=\0$ (respectively, \mbox{$O(p)=\1$}) indicates that opinion $O$ rejects (respectively, accepts) the issue $p$. Syntactically, the two opinions correspond to the truth of the literals $p$ or $\neg p$. For $p \in \Atoms$ we write $\pm p$ to denote one element from $\set{p, \neg p}$. An \emph{opinion profile} $\O=(O_1,\dots,O_{n})$ records the opinion, on the given set of issues, of every individual in $\N$. Given a profile $\O$ the $i^{\mathit{th}}$ projection $\O$ is denoted $O_i$ (i.e., the opinion of agent $i$ in profile $\O$).
We also denote by $\O(p)= \set{i \in \N \mid O_{i}(p)= \1}$ the set of agents accepting issue $p$ in profile $\O$ and by $\O(p^-)= \set{i \in \N \mid O_{i}(p)= \0}$. 

Given a BA structure $\S$, an aggregation rule (or {\em aggregator}) for $\S$ is a function $F:(\D_{c})^\N \to \D$, mapping every profile of consistent opinions to one collective opinion in $\D$. 
$F(\O)(p)$ denotes the outcome of the aggregation on issue $p$. A benchmark aggregator is \emph{issue-by-issue strict majority rule} ($\maj$), which accepts an issue if and only if the majority of the population accepts it: 
\begin{align}\label{eq:maj}
\maj(\O)(p)= \1 \IFF |\O(p)| \geq \frac{|\N|+1}{2}.
\end{align}
%Note for Zoe: the rule is biased if |N| is even: if exactly half of the population all give 0 to p and 1 to q, \lnot p is "accepted" (i.e $\maj(\O)(p)=\0)$, but $q$ is not ($\maj(\O)(q)=\0)$.  

It is well-known that aggregation by majority does not preserve consistency. The standard example is provided by the discursive dilemma, represented by the BA structure $\tuple{\set{1,2,3},\set{p,q,r},r \lequiv p \land q}$. The profile consisting of $O_1 \models p \land q \land r$, $O_2 \models p \land \neg q \land \neg r$, $O_3 \models \neg p \land q \land \neg r$, returns an inconsistent majority opinion $\maj(\O) \models p \land q \land \neg r$.

%%%%%%%%%%%%%%%%%%%%%%%%%%%%%%%

\section{Relevant terminology from graph theory} \label{appendix:graph}
 
Let $G = \tuple{\N, R}$ be a graph and $R^{*}$ be the transitive and symmetric closure of $R$.  
A \emph{path} is a sequence of nodes $\tuple{i_1,\dots, i_k}$, such that, for all $l\in\{1,\dots,k\}$, $i_lRi_{l+1}$. 
The \emph{distance} between two nodes $i,j$ is the length of the shortest path $\tuple{i,\dots, j}$ between them. 
The \emph{diameter} of a graph is the maximal distance between any two nodes related by a path.
A \emph{cycle} is a path of length $k$ such that $i_1=i_k$. 
A set of nodes $S\subseteq \N$ is said to be:

\begin{itemize}
\item[]\emph{a cycle in $G$} if all elements in $S$ are in one cycle of length $|S|$, 
\item[]\emph{connected} if for any $i , j \in S$: $i R^{*}j$, 
\item[] \emph{strongly connected} if for any $i,j \in S$: there is a path $\tuple{i,\dots, j}$, 
\item[]\emph{closed} if for any $i\in S$, $j \notin S$, it is not the case that $iRj$, 
\item[] a \emph{connected component} if for any $i,j \in \N$: $iR^{*}j$ if and only if $i,j\in S$, 
\item[] \emph{aperiodic} if the greatest common divisor of the lengths of its cycles is $1$.  
\end{itemize}

%%%%%%%%%%%%%%%%%%%%%%%%%%%%%%%%%

\end{document}